\theoremstyle{plain}
\newtheorem{theorem}{Теорема}
\newtheorem{statement}{Утверждение}
\newtheorem{lemma}{Лемма}
\newtheorem{corollary}{Следствие}
\theoremstyle{definition}
\newtheorem{definition}{Определение}
\newtheorem{example}{Задача}
\DeclareMathOperator{\Vol}{Vol}
\newcommand{\M}{\mathbf{M}}
\newcommand{\D}{\mathbf{D}}
\author{Чуклин А.Ю.\thanks{
    Московский Физико-Технический Институт}
}
\title{Эффективные протоколы
для задачи синхронизации слов с ограниченным хэмминговским расстоянием.}
\begin{document}

\maketitle

\begin{abstract}

Предположим, что на двух компьютерах хранятся файлы, которые в некотором смысле
похожи друг на друга. Требуется переслать один из этих файлов с первого компьютера
на второй, передавая по каналу связи как можно меньше данных.

В данной работе мы приводим обзор результатов, известных для данной коммуникационной задачи
для случая, когда рассматриваемые файлы ``близки'' друг к другу в смысле расстояния
Хэмминга. В основном мы систематизируем результаты известные ранее (полученные
разными авторами в 1990-х и 2000-х годах) и обсуждаем связь данной задачи с теорией
кодирования, алгоритмами хэширования и другими областями теоретической информатики.
В отдельных случаях предлагаются некоторые улучшения существующих конструкций.
\end{abstract}

\section{Введение}

Мы рассматриваем задачи об обмене информацией между двумя вычислительными
устройствами (для ясности скажем~--- между двумя компьютерами), соединёнными 
каналом связи. Пусть на этих  компьютерах хранятся файлы $X$ и $Y$ соответственно, 
и эти файлы не очень сильно отличаются друг от друга. Требуется переслать файл $X$ 
с  первого компьютера на второй, передавая по каналу связи как можно меньше данных. 
Разумеется, можно просто передать $X$ с первого компьютера на второй бит за битом.
Однако мы хотим воспользоваться тем, что на втором компьютере уже есть файл $Y$;
это даёт нам надежду, что возможен и более эффективный способ решения (требующий
передачи меньшего числа битов).

Данную задачу обычно называют \emph{задачей о синхронизации файлов}. Она 
представляет как теоретический, так и практический интерес. Чтобы предать 
задаче точный математический смысл, нужно сделать несколько уточнений. Следует указать,
в каком смысле файлы \emph{похожи} друг на друга. Кроме того, нужно уточнить, насколько
сложные вычисления могут производит оба компьютера, можно ли передавать по каналу связи 
данные  только в одном  направлении, или в обоих, из скольких раундов может состоять общение
между компьютерами и т.д. Различные способы уточнения условий задачи приводят к разным 
задачам \emph{теории коммуникационной сложности}.

Для большинства практических применений более адекватной кажется мера близости файлов, 
основанная на том или ином варианте редакторского расстояния. Задача синхронизации файлов с 
ограниченным редакторским расстояниям изучалась в~\cite{evfimievski,orlitsky-practical}, 
а также в~\cite{tridgell}.
Однако большинство известных результатов относятся к случаю,
когда  рассматриваемые файлы ``близки'' в смысле расстояния
Хэмминга (двоичные строки отличаются друг от друга в сравнительно небольшой доле позиций).
Хэмминговская метрика  кажется вполне естественной в контексте классической
теории кодирования. Неудивительно, что именно для хэмминговского отношения близости
для данной  коммуникационной задачи получены наиболее сильные и красивые результаты.

В данной работе мы ограничиваемся обзором результатов только для отношения
близости в смысле хэмминговского расстояния.
Мы систематизируем известные результаты для данной коммуникационной задачи
(для детерминированных и  вероятностных коммуникационных протоколов), обращая
особое внимание на вычислительную  сложность используемых алгоритмов.
В основном мы рассматриваем результаты,  известные по работам А.Орлитского
(\cite{orlitsky-worst-1}, \cite{orlitsky-worst-2}, \cite{orlitsky-average},
\cite{orlitsky}) и А.Смита (\cite{smith}).

В отдельных случаях мы приводим некоторые ранее не публиковавшиеся улучшения
конструкций коммуникационных протоколов, как например
протокол \ref{multiple NBA} для решения обобщённой задачи NBA или
альтернативное решение вероятностной задачи
синхронизации (утверждение~\ref{random_alt}).

\subsection{Исторический обзор и план статьи}
Понятие коммуникационной сложности было введено Э.Яо в 1979 году
в статье \cite{yao}.
В 1997 году вышла книга Э.Кушилевица и Н.Нисана \cite{kushilevitz-nisan}
с подробным изложением
постановок задач, методов и основных результатов данной области. Там же
был выделен такой раздел коммуникационной сложности, как
``общение с частичной информацией'' (\cite[Глава 4]{kushilevitz-nisan}),
к которому принадлежит изучаемая нами задача. Значительный вклад
в изучение задачи внёс А.Орлитский в своих статьях,
опубликованных в 1990-2000гг\footnote{
    В его работах используется термин ``интерактивное общение'' вместо термина,
    используемого Кушилевицем и Нисаном.
}. К наиболее поздним достижениям можно отнести результат А.Смита
\cite{smith} (2007 год) о вероятностном коммуникационном
протоколе. Позже идеи из этой работы были использованы в статье \cite{smith-guruswami}
2009 года для построения эффективных ``стохастических''
кодов исправляющих ошибки.

Данный обзор организован следующим образом. В начале мы обсуждаем постановку
задачи и детерминированный протокол её решения, оптимальный в
смысле числа передаваемых битов (разделы 2-3).
Затем мы переходим к построению вычислительно эффективного протокола (разделы 4-5).
В разделе 6 обсуждается вероятностная постановка задачи и различные алгоритмы
её решения. В разделе 7 приведена сводка полученных результатов и
сравнение различных алгоритмов.

Утверждения~\ref{lower-bound}, \ref{one-round-non-constructive} встречались
в разных работах и, по-видимому, не имеют единого автора.
Теорема~\ref{non-constructive sync}
о достижимости нижней оценки была впервые доказана А.Орлитским в 1993 году
в его работе~\cite{orlitsky}. Позже в 2003 году им же было доказана
теорема~\ref{one_not_the_best} о неоптимальности однораундового протокола.
Утверждение~\ref{constructive NBA} о конструктивном решении задачи
NBA неявно присутствовало в книге Э.Кушилевица и Н.Нисана
(\cite{kushilevitz-nisan}) 1997 года,
но не было ни доказано, ни чётко сформулировано. Утверждение~\ref{multiple NBA}
о модифицированной задаче NBA
рассматривается в данной работе впервые.
Утверждение~\ref{tricky} с использованием кодов,
исправляющих ошибки, было впервые доказано А.Орлитским в 1993 году
в той же работе~\cite{orlitsky}. Идея использовать коды, допускающие
декодирование списком, обозначена в работе~\cite{smith} 2007 года, но
никакого утверждения аналогичного~\ref{tricky-list} в работе не приводится.
Вероятностная постановка задачи вместе с доказательством теоремы~\ref{random}
исследуется в работе А.Смита~\cite{smith} 2007 года. В той же работе без доказательства
формулируется утверждение~\ref{strong converse}. Альтернативная
предложенной А.Смитом конструкция
вероятностного протокола (утверждение~\ref{random_alt})
и конструкция однораундового
вероятностного протокола с использованием кодов декодирования списком
(утверждение~\ref{random list-dec})
рассматриваются в данной работе впервые.

\section{Определения и постановка задачи}

Мы начнём с общего определения задачи коммуникационной сложности.
Участников коммуникационного протокола мы по традиции называем Алисой и Бобом.
Пусть у Алисы имеется строка $X \in \{0, 1\}^n$, а у Боба ---
строка $Y \in \{0, 1\}^m$. При этом Алисе и Бобу заранее известно, что
их входные данные (строки $X$ и $Y$) находятся в некотором отношении $S$.
Например, известно, что $n=m$ и хэмминговское расстояние между этими строками
не слишком велико.
Пусть также задана некоторая функция $f\colon \{0, 1\}^n \times \{0, 1\}^m \to \{0, 1\}^r$.
В теории коммуникационной сложности изучается следующая задача:\\

\noindent
\textit{Какое минимальное количество битов необходимо передать
от Алисы к Бобу и наоборот,
чтобы Боб узнал значение $f(X, Y)$?} \\

\noindent
При этом разрешается  передавать информацию в обоих направлениях,
но не требуется, чтобы Алиса узнала значение функции $f$.
Мы будем называть данный класс  задач \emph{коммуникационными задачами}.
(В монографии Кушилевица и Нисана \cite{kushilevitz-nisan} 
такой класс задач называется ``общением с частичной информацией'' ---
communication with partial information.)
Мы предполагаем, что Алиса и Боб заранее,
ещё не зная $X$ и $Y$, договариваются о протоколе общения. Затем, получив каждые
свою битовую строку, они обмениваются сообщениями 
в соответствии с оговорённым протоколом.
В результате этого диалога Боб должен определить значение $f(X,Y)$.
\emph{Коммуникационной сложностью} протокола общения
между Алисой и Бобом  мы называем количество передаваемых ими битов
(мы суммируем биты, переданные в обоих направлениях). Как правило, 
для получения верхних оценок на коммуникационную сложность задачи достаточно
подобного ``интуитивного'' определения (доказательство верхней оценки обычно состоит 
в предъявлении некоторого конкретного коммуникационного протокола). 
Однако для доказательства нижних
оценок (когда мы доказываем, что протоколов малой сложности не существует) требуется
иметь более формальное описание коммуникационного протокола:

\begin{definition}[Детерменированный коммуникационный протокол]\label{deterministic}
Де\-тер\-ме\-ни\-ро\-ван\-ным протоколом $\mathfrak{P}$
из $\{0,1\}^n \times \{0,1\}^m$ в $\{0,1\}^r$ называется
двоичное дерево, каждый внутренний узел $v$ которого помечен
либо функцией $a_v: \{0,1\}^n \to \{0, 1\}$, либо функцией $b_v: \{0,1\}^m \to \{0, 1\}$,
а каждый лист помечен функцией $z_l: \{0,1\}^m \to \{0,1\}^r$. Дополнительно
потребуем, чтобы вершины, предшествующие листьям,
были помечены функцией типа $a$, т.к. в нашей постановке окончательный ответ
должен получить Боб.

Значение протокола $\mathfrak{P}$ на входе $(X, Y)$ вычисляется обходом дерева
из корня до листа. В каждом внутреннем узле применяется соответствующая
функция ($a_v$ к $X$, $b_v$ к $Y$) и, в зависимости от результата, осуществляется
переход в левый узел (если значение функции 0) или в правый (если значение 1).
В листе Боб применяет функцию $z$ к своему $Y$ и получает значение $f'(X, Y)$,
которое должно совпадать с $f(X, Y)$ при любых $X$ и $Y$.

Сложностью данного протокола называется глубина дерева.
\end{definition}

Рассмотрим произвольный путь от корня к листу дерева. Выпишем порядок смены
типов функций, написанных в промежуточных узлах. Например, это 
$a, a, b, a, b, b, a, z$.
Посчитаем, количество смен $a$ на $b$ (или на $z$) и наоборот.
В нашем примере это число
равно 5, значит данное общение произошло в 5 раундов ---
в каждом раунде Алиса или Боб
могут послать сразу несколько битов.

\begin{definition}[Число раундов протокола]
Говорят, что протокол $\mathfrak{P}$ является $k$-раундовым,
если \textbf{любой}
путь от корня к дереву имеет не более $k$ смен типов функций.
\end{definition}

\begin{definition}[Коммуникационная сложность задачи]
Коммуникационной сложностью задачи называется минимальная сложность протокола
её решающего.
\end{definition}

В работах по теории коммуникационной сложности часто требуют, чтобы по окончании
коммуникационного протокола оба участника (Алиса и Боб) узнавали значение $f(X,Y)$.
Обращаем внимание читателя, что в нашем варианте определения только Боб обязан узнать
результат вычисления. Если $f(X,Y)$ состоит из большóго числа битов, то коммуникационные
сложности задачи могут существенно различаться в зависимости от того, требуется ли, чтобы
\textit{оба} участника узнавали ответ, или только один из них.

\bigskip

Введём теперь определение вероятностного протокола:

\begin{definition}[Вероятностный протокол]\label{probabilistic_private}
Как и в определении \ref{deterministic} протокол задаётся деревом. Только теперь
кроме строк $X$ и $Y$ имеются также
случайные строки $V_A \in \{0, 1\}^{r_A}$ (у Алисы)
и $V_B \in \{0, 1\}^{r_B}$ (у Боба),
и каждый внутренний узел помечен функцией
$a_v: \{0, 1\}^n \times \{0, 1\}^{r_A} \to \{0, 1\}$ либо
$b_v: \{0, 1\}^m \times \{0, 1\}^{r_B} \to \{0, 1\}$, а каждый лист ---
$z_l: \{0, 1\}^m \times \{0, 1\}^{r_B} \to \{0, 1\}^r$.
При этом мы требуем, чтобы протокол ошибался с вероятностью
не больше $\varepsilon$ (вероятность
берётся по случайным строкам $V_A$, $V_B$):
\begin{gather*}
    \forall (X, Y) \in S \;\;
    Pr_{V_A, V_B} \left[f'(V_A, V_B, X, Y) \ne f(X, Y) \right] \le \varepsilon
\end{gather*}
\end{definition}

\medskip

Приведём также другое определение, которое имеет меньший практический смысл, но бывает удобно
при доказательстве утверждений:

\begin{definition}[Протокол с общим источником случайности]
\label{probabilistic_common}
Как и в оп\-ре\-де\-ле\-нии \ref{probabilistic_private} 
мы имеем случайные строки у Алисы и Боба, только
на сей раз эта строка для них общая ($V_A = V_B = V$) и вероятность берётся по $V$:
\begin{gather*}
    \forall (X, Y) \in S \;\;
    Pr_{V} \left[f'(V, X, Y) \ne f(X, Y) \right] \le \varepsilon
\end{gather*}
В этом случае у Алисы и Боба имеется некоторая дополнительная общая информация, которая в некоторых
случаях приводит к существенному понижению коммуникационной сложности,
сохраняя прежнюю вероятность ошибки.
\end{definition}

Наша основная задача (о синхронизации файлов с ограниченным расстоянием Хэмминга)
легко формализуется с помощью определения \ref{deterministic}: 

\begin{example}[Задача о синхронизации файлов]\label{files}
Пусть $X$ и $Y$~-- битовые строки длины $n$, которые отличаются
не более чем в $\alpha n$ местах, т.е. $\rho(X, Y) \le \alpha n$, 
где $\rho(\cdot, \cdot)$ обозначает расстояние Хэмминга.
Формального говоря, $X$ и $Y$ связаны отношением
 $$S = \{(X, Y) \; | \; \rho(X, Y) \le \alpha n\}.$$
Требуется вычислить функцию $f(X, Y) = X$.
\end{example}

Прежде чем приступить к описанию простейших коммуникационных протоколов,
мы сформулируем ещё одну коммуникационную задачу (в дальнейшем
её решение нам потребуется в качестве промежуточного шага к решению нашей
основной  задачи~\ref{files}).

\begin{example}[Задача NBA]\label{nba}
Пусть $Y$ есть множество из $k$ строк\footnote{
    Мы позволяем себе некоторую вольность речи  и попеременно называем $Y$  строкой, набором строк 
    или чем-то ещё;  любой конструктивный объект легко закодировать в виде двоичной 
    строки, так что конкретный выбор представления входных данных не имеет большого значения.
}
длины $n$, а $X$ --- одна из строк, входящих в $Y$. 
Чтобы задать
коммуникационную задачу, остаётся определить функцию $f$. Мы положим 
$f(X, Y) = X$.
\end{example}

Общепринятое название задачи  NBA (от National Basketball Association ---
Национальная баскетбольная ассоциация Северной Америки) объясняется следующей метафорой.
Пусть Боб является баскетбольным болельщиком, который пропустил последнюю серию игр.
Вчера Алиса смотрела новости и знает какая команда победила, но не  знает
названия остальных команд, игравших в этой серии.
Задача состоит в том, чтобы передать название команды-победителя от Алисы к Бобу.

Наивное решение задачи NBA состоит в том, чтобы послать строку $X$ 
от Алисы к Бобу. Данное решение требует передачи $n$ битов информации. В дальнейшем
мы увидим, что при $k$ много меньшем $n$ задача имеет гораздо более эффективное решение.

\section{Детерминированные протоколы для задачи о синхронизации файлов}

В этой главе мы изучаем детерминированную коммуникационную
сложность для задачи~\ref{files}. Мы получим асимптотически
совпадающие верхние и нижние оценки и покажем, что  асимптотически
оптимальная коммуникационная сложность в этой задаче достигается
протоколами с 3 раундами. Дополнительно мы обсудим вопросы, связанные
с 1-раундовыми протоколами. При этом мы пока
никак не ограничиваем вычислительную сложность
алгоритмов Алисы и Боба (вычислительную сложность протоколов
мы рассмотрим подробно  в следующей главе).

\begin{statement}
\label{lower-bound}
Для каждого $\alpha <1/2$ детерминированная 
коммуникационная сложность  задачи~\ref{files} ограничена снизу величиной
\begin{gather*}
    H(\alpha) \cdot n + o(n),
\end{gather*}
где через $H(\cdot)$ обозначена двоичная энтропия:\\
$H(p) = - p \log p - (1 - p) \log (1 - p)$.
\end{statement}
\begin{proof}
Зафиксируем известное Бобу слово $Y$. При
этом в каждом листе $l$ дерева-протокола зафиксируется некоторое значение
$X_l$ --- возможный претендент на слово Алисы.
По условию задачи $X$ находится в  хэмминговском шаре
радиуса $\alpha n$ с центром в $Y$. Этот шар состоит из
 $\Vol(\alpha n, n) = \sum_{k=0}^{\alpha n} C_n^k$
точек.
Поскольку каждая строка длины $n$ из этого шара должна быть написана хотя бы
в одном из листьев, число листьев не может быть меньше $\Vol(\alpha n, n)$.
Высота дерева равна числу передаваемых
(в худшем случае) битов, следовательно,
сложность протокола не может быть меньше двоичного логарифма числа листьев
$\Vol(\alpha n, n)$. С помощью оценки Стирлинга нетрудно получить равенство
$\Vol(\alpha n, n) = 2^{H(\alpha)n + o(n)}$, что завершает доказательство утверждения.
\end{proof}
\begin{statement}\label{one-round-non-constructive}
Для любого $\alpha<1/2$ для задачи~\ref{files}
существует детерминированный однораундовый коммуникационный
протокол  сложности
$H(2 \alpha) \cdot n + o(n)$.
\end{statement}
\begin{proof}
Рассмотрим граф, вершинами которого будут все двоичные
строки длины $n$. Две вершины мы соединяем ребром,  если
хэмминговское расстояние между соответствующими строками не больше
$2 \alpha n$. Тогда степень каждой вершины будет равна
$\Vol(2 \alpha n, n) - 1$ (здесь, как и в предыдущем доказательстве,
$\Vol(r, n)$ обозначает число точек в хэмминговском шаре радиуса
$r$ в $\{0,1\}^n$).

Несложно показать, что любой граф степени $d$ допускает раскраску в $d+1$ цвет, так что все вершины,
соединённые ребром, покрашены в разный цвет
(можно окрашивать вершины графа в произвольном порядке;
для покраски очередной вершины может быть не более $d$ ограничений,
накладываемых покрашенными ранее соседями, так что всегда можно выбрать
цвет, не нарушающий условия).
Таким образом, интересующая нас раскраска существует. Мы фиксируем
одну такую раскраску и на её основе построим коммуникационный протокол.

По определению нашего графа,
любые вершины в пределах одного хэмминговского шара
радиуса $\alpha n$ оказываются
покрашены в разные цвета (т.к. расстояние между ними не больше диаметра шара,
т.е. $2 \alpha n$). Теперь несложно построить протокол.
Алиса посылает Бобу цвет своего слова $X$.
После этого Боб находит слово указанного цвета
в $\alpha n$-окрестности своего слова $Y$ и однозначно восстанавливает $X$.
При этом пересылается $\log (d + 1) = \log \Vol(2 \alpha n, n) 
= H(2 \alpha) \cdot n + o(n)$ битов, что и требовалось доказать.
\end{proof}

Возникает естественный вопрос. Можно ли существенно улучшить эту оценку?
В работе~\cite{orlitsky-one-way} А.Орлитским была доказана следующая теорема:

\begin{theorem}[А.Орлитский]\label{one_not_the_best}
Для любого $0 < \alpha < \frac{1}{4}$ существует $\beta > 0$, такое что для
достаточно большого $n$ сложность однораундового протокола будет не меньше,
чем $(1+\beta) \log \Vol(\alpha n, n)$.
\end{theorem}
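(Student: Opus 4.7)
Мой план --- свести задачу к известным верхним оценкам на мощность двоичных кодов с заданным минимальным расстоянием. Ключевое наблюдение следующее: в любом однораундовом протоколе сообщение Алисы $m(X)$ зависит только от её собственного входа. Чтобы Боб мог однозначно восстановить $X$ по паре $(Y, m(X))$, необходимо, чтобы в каждом хэмминговском шаре радиуса $\alpha n$ лежало не более одного прообраза каждого возможного сообщения $c$. Это эквивалентно условию: если $m(X_1) = m(X_2)$ и $X_1 \ne X_2$, то $\rho(X_1, X_2) > 2\alpha n$ (в противном случае найдётся $Y$, одинаково близкий к $X_1$ и $X_2$, и Боб не сможет их различить). Таким образом, прообразы $m^{-1}(c)$ образуют двоичные коды с минимальным расстоянием не меньше $2\alpha n + 1$, и число различных сообщений не меньше $2^n / A(n, 2\alpha n + 1)$, где $A(n, d)$ --- максимальная мощность двоичного кода длины $n$ с минимальным расстоянием $d$. Сложность протокола, следовательно, не меньше $n - \log A(n, 2\alpha n + 1)$.

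Далее я бы применил не границу Хэмминга (сферной упаковки), потому что она даёт в точности $\log \Vol(\alpha n, n)$ и никакого зазора по сравнению с утверждением~\ref{lower-bound} не создаёт, а более сильную границу Элиаса--Бассалыго:
\[
    \log A(n, 2\alpha n) \le n\bigl(1 - H(J(2\alpha))\bigr) + O(\log n), \qquad
    J(\delta) = \frac{1 - \sqrt{1 - 2\delta}}{2},
\]
где $J(\cdot)$ --- радиус Джонсона. Именно условие $\alpha < 1/4$ из формулировки теоремы гарантирует, что $2\alpha < 1/2$ и граница Элиаса--Бассалыго применима. Подставляя её, получаем, что сложность однораундового протокола не меньше $H(J(2\alpha)) \cdot n - o(n)$.

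Остаётся проверить, что $H(J(2\alpha)) > H(\alpha)$ при $0 < \alpha < 1/4$. Прямая выкладка показывает, что неравенство $J(2\alpha) > \alpha$ сводится к $(1 - 2\alpha)^2 > 1 - 4\alpha$, то есть к $4\alpha^2 > 0$, что верно для всех $\alpha > 0$. Поскольку при $\alpha < 1/4$ обе величины $\alpha$ и $J(2\alpha)$ попадают на участок строгой монотонности двоичной энтропии (интервал $(0, 1/2)$), получаем строгое неравенство $H(J(2\alpha)) > H(\alpha)$. Выбирая $\beta$ чуть меньше, чем $H(J(2\alpha))/H(\alpha) - 1$, можно поглотить слагаемое $o(n)$ в разности и получить требуемую оценку $(1+\beta) \log \Vol(\alpha n, n)$ при достаточно больших $n$.

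Главное препятствие в этой схеме --- это именно необходимость ссылки на ``нетривиальную'' верхнюю оценку $A(n, d)$: сферная упаковка зазора не оставляет, и без улучшения её асимптотики теорему доказать нельзя. По сути, выигрыш возникает из-за того, что внутри шара радиуса $2\alpha n$ помещается больше точек, чем в шаре удвоенного радиуса $\alpha n$, --- и коды минимального расстояния $2\alpha n$ ``упаковываются'' в $\{0,1\}^n$ эффективнее, чем если бы мы пользовались только аргументом разъединимости.
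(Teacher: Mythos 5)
Статья не содержит собственного доказательства этой теоремы — она лишь цитируется по работе Орлитского, поэтому сравнивать ваш текст не с чем, и я оцениваю его как самостоятельное доказательство. Ваша схема корректна и по существу воспроизводит классический путь: в однораундовом протоколе каждый слой $m^{-1}(c)$ обязан быть кодом с минимальным расстоянием не меньше $2\alpha n+1$ (для двух прообразов на расстоянии $\le 2\alpha n$ нашёлся бы общий $Y$ в пересечении их $\alpha n$-окрестностей, и Боб не смог бы их различить), слои покрывают всё $\{0,1\}^n$ (любое $X$ допустимо, так как $(X,X)\in S$), значит сложность не меньше $n-\log A(n,2\alpha n+1)$. Вы верно замечаете, что граница сферической упаковки здесь даёт ровно $\log \Vol(\alpha n, n)$ и зазора не создаёт, так что необходим любой строго лучший результат о кодах; граница Элиаса--Бассалыго подходит, и условие $\alpha<1/4$ в точности обеспечивает её применимость ($2\alpha<1/2$) и попадание $J(2\alpha)$ на возрастающий участок энтропии. Проверка $J(2\alpha)>\alpha$ через $(1-2\alpha)^2>1-4\alpha$ верна, и выбор $\beta$ чуть меньше $H(J(2\alpha))/H(\alpha)-1$ действительно поглощает члены $o(n)$. Единственные места, которые стоило бы проговорить чуть аккуратнее: сложность протокола есть глубина дерева, которая не меньше логарифма числа различных сообщений (листьев), и что из $\rho(X_1,X_2)=d\le 2\alpha n$ общий сосед $Y$ строится явно (берём $\lceil d/2\rceil$ различающихся позиций от одного слова и остальные от другого); оба пункта у вас присутствуют неявно и восстанавливаются без труда.
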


Таким образом, достичь нижней грани на однораундовых протоколах невозможно.
С другой стороны, величина $\beta$, используемая в оценке, не очень велика:
при $\alpha$, стремящемся к нулю, $\beta$ стремится к нулю примерно с той же скоростью.

\bigskip

Далее мы покажем, что существует коммуникационный протокол, асимптотически достигающий
нижней оценки Утверждения~\ref{lower-bound}. Общение между Алисой и Бобом
в  этом протоколе будет происходить в три раунда.
Для построения этого протокола мы воспользуемся
локальной леммой Ловаса \cite{lovasz}:
\begin{lemma}[Локальная лемма Ловаса, симметричный случай]\label{lll}
\par Пусть \\
$A_1, A_2, \ldots, A_n$ --- события в произвольном вероятностном
пространстве. Пусть каждое событие $A_i$ взаимно независимо со всеми событиями,
за исключением не более чем $d$ событий $A_j$. Пусть также $P(A_i) \le p$ для
всех $1 \le i \le n$. Если при этом
\begin{gather*}
  e p (d + 1) \le 1,
\end{gather*}
то верно неравенство:
\begin{gather*}
  P\left(\bigwedge_{i=1}^n \overline{A_i}\right) > 0 \text{,}
\end{gather*}
т.е. с ненулевой вероятностью не выполнено ни одно из $A_i$.
\end{lemma}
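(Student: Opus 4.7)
План доказательства: индукцией по $|S|$ установим более сильное вспомогательное утверждение~--- для любого индекса $i$ и любого подмножества $S \subseteq \{1, \ldots, n\}$, не содержащего $i$,
\[
    P\left(A_i \;\middle|\; \bigwedge_{j \in S} \overline{A_j}\right) \le \frac{1}{d+1}.
\]
Отсюда искомое неравенство получается немедленно: раскладывая вероятность пересечения по цепному правилу, имеем
\[
    P\left(\bigwedge_{i=1}^n \overline{A_i}\right) = \prod_{i=1}^n P\left(\overline{A_i} \;\middle|\; \bigwedge_{j<i} \overline{A_j}\right) \ge \left(1 - \frac{1}{d+1}\right)^n > 0.
\]
База индукции (случай $S = \varnothing$) тривиальна: $P(A_i) \le p \le \frac{1}{e(d+1)} \le \frac{1}{d+1}$ в силу условия $ep(d+1) \le 1$.

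Шаг индукции: разобьём $S = S_1 \sqcup S_2$, где $S_1$~--- индексы событий, зависящих от $A_i$ (их не более $d$ по условию), а $S_2 = S \setminus S_1$. По определению условной вероятности
\[
    P\left(A_i \;\middle|\; \bigwedge_{S} \overline{A_j}\right) = \frac{P\left(A_i \wedge \bigwedge_{S_1} \overline{A_j} \;\middle|\; \bigwedge_{S_2} \overline{A_j}\right)}{P\left(\bigwedge_{S_1} \overline{A_j} \;\middle|\; \bigwedge_{S_2} \overline{A_j}\right)}.
\]
Числитель оцениваем сверху величиной $P(A_i \mid \bigwedge_{S_2} \overline{A_j}) = P(A_i) \le p$, где равенство вытекает из взаимной независимости $A_i$ с семейством $\{A_j : j \in S_2\}$. Знаменатель по цепному правилу превращается в произведение не более $d$ сомножителей вида $P(\overline{A_{j_k}} \mid \bigwedge_{l < k} \overline{A_{j_l}} \wedge \bigwedge_{S_2} \overline{A_j})$; условное множество каждого из них имеет мощность строго меньше $|S|$, поэтому к этим сомножителям применима индуктивная гипотеза, и каждый из них не меньше $1 - \frac{1}{d+1} = \frac{d}{d+1}$. Собирая оценки,
\[
    P\left(A_i \;\middle|\; \bigwedge_{S} \overline{A_j}\right) \le p \cdot \left(\frac{d+1}{d}\right)^d \le ep \le \frac{1}{d+1},
\]
где на последнем шаге используется условие $ep(d+1) \le 1$, а на предыдущем~--- классическая оценка $(1+1/d)^d \le e$.

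Главная тонкость~--- аккуратное использование того, что $A_i$ \emph{взаимно} (а не только попарно) независимо с семейством $\{A_j : j \in S_2\}$: именно взаимная независимость позволяет заменить $P(A_i \mid \bigwedge_{S_2} \overline{A_j})$ на безусловную вероятность $P(A_i)$ при оценке числителя. Остальные шаги сводятся к аккуратной бухгалтерии с разложением условных вероятностей по цепному правилу и к элементарной оценке $(1+1/d)^d \le e$.
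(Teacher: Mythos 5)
The paper does not actually prove this lemma: it only refers the reader to Alon and Spencer for the proof. Your argument is precisely the standard inductive proof from that reference (induction on the size of the conditioning set, splitting $S$ into the at most $d$ dependent indices and the rest, bounding the numerator via mutual independence and the denominator via the induction hypothesis, then using $(1+1/d)^d \le e$), and it is correct; the only point left implicit is that all the conditioning events have positive probability, which follows along the same induction since each conditional probability is bounded away from $1$.
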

Доказательство этой леммы можно найти, например, в~\cite{alon-spencer}.

\begin{theorem}[A.Orlitsky]\label{non-constructive sync}
Для задачи~1
существует трёхраундовый протокол с коммуникационной сложностью
$H(\alpha) \cdot n + o(n)$.
\end{theorem}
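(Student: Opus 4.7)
План доказательства состоит в том, чтобы с помощью локальной леммы Ловаса (Лемма~\ref{lll}) построить хэш-функцию, эффективно ``сжимающую'' шары Хэмминга, и затем свести оставшуюся часть задачи к задаче NBA (Задача~\ref{nba}).

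На первом этапе я докажу существование функции $\varphi\colon \{0,1\}^n \to [M]$ с $M = 2^{H(\alpha)n + o(n)}$, обладающей следующим свойством: для каждого $Y\in\{0,1\}^n$ и каждого $m\in[M]$ множество $\varphi^{-1}(m)\cap B(Y,\alpha n)$ содержит не более $t$ строк, где $t = O(n/\log n)$. Выберем $\varphi$ равномерно случайной. Для каждого $Y$ введём ``плохое'' событие $A_Y$: у некоторого значения $m$ прообраз $\varphi^{-1}(m)\cap B(Y,\alpha n)$ имеет мощность более $t$. По неравенству Чернова вероятность того, что в фиксированную ячейку попадёт более $t$ элементов из шара, не превосходит $(e\mu/t)^t$, где $\mu = \Vol(\alpha n, n)/M \le 1$ при подходящем выборе $M$. Объединяя по $M$ ячейкам, получаем $P(A_Y)\le M\cdot (e/t)^t$. События $A_Y$ и $A_{Y'}$ независимы, когда шары $B(Y,\alpha n)$ и $B(Y',\alpha n)$ не пересекаются, то есть при $\rho(Y,Y')>2\alpha n$; тем самым степень зависимости не превосходит $\Vol(2\alpha n, n) = 2^{H(2\alpha)n+o(n)}$. Условие $ep(d+1)\le 1$ леммы Ловаса сводится к неравенству $t\log t \gtrsim \bigl(H(\alpha)+H(2\alpha)\bigr)n$, которое выполняется при $t = C n/\log n$ с достаточно большой константой $C$.

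На втором этапе построим трёхраундовый протокол. В первом раунде Алиса посылает Бобу значение $\varphi(X)$, что требует $\log M = H(\alpha)n + o(n)$ битов. Во втором и третьем раундах Боб и Алиса разыгрывают двухраундовый протокол для задачи NBA (Задача~\ref{nba}) на множестве $T = B(Y,\alpha n)\cap \varphi^{-1}(\varphi(X))$, содержащем $X$ и имеющем мощность $|T|\le t$. Существование двухраундового протокола для NBA с логарифмической по размеру множества сложностью --- это самостоятельный факт, обсуждаемый далее в статье; он даёт $O(\log t) = O(\log n) = o(n)$ битов. Последовательность типов функций вдоль любого пути в дереве-протоколе тогда имеет вид ``$a$, $b$, $a$, $z$'' (в обозначениях определения~\ref{deterministic}), то есть ровно 3 раунда; суммарная коммуникация составляет $H(\alpha)n+o(n)$ битов.

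Главная сложность --- точный подбор параметров в локальной лемме. Степень зависимости $d = 2^{H(2\alpha)n + o(n)}$ экспоненциально велика в $n$, так что для выполнения условия $ep(d+1)\le 1$ требуется, чтобы $P(A_Y)$ была сверхэкспоненциально мала. Это достижимо лишь при достаточно большом $t$. Выбор $t = \Theta(n/\log n)$ оказывается удачным компромиссом: с одной стороны, $t\log t = \Theta(n)$ компенсирует множитель $M\cdot(d+1) = 2^{\Theta(n)}$; с другой --- $\log t = O(\log n) = o(n)$, что сохраняет малость коммуникации в подпротоколе NBA.
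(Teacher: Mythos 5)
Ваше доказательство корректно и по существу совпадает с доказательством в статье: то же применение локальной леммы Ловаса к случайной раскраске (ваша функция $\varphi$ --- это раскраска $\chi$ из статьи, с тем же размером $M=\Vol(\alpha n, n)$, той же степенью зависимости $\Vol(2\alpha n,n)$ и тем же итоговым условием $t\log t\gtrsim (H(\alpha)+H(2\alpha))n$ при $t=\Theta(n/\log n)$), а затем та же редукция к двухраундовому протоколу NBA. Единственные косметические отличия --- оценка $P(A_Y)$ через неравенство Чернова вместо прямой оценки $C_V^k|\chi|^{-k}|\chi|$ и ссылка на протокол NBA как на внешний факт вместо явного построения семейства хэш-функций (лемма о семействе из $O(nk)$ функций), которое статья приводит внутри доказательства.
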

\begin{proof}
Протокол будет состоять из двух частей. В первой части Алиса посылает
Бобу некоторую хэш-функцию от своего слова 
(назовём её значение \textit{цветом} слова, 
чтобы не путать с другими хэш-функциями, которые появятся в доказательстве позже).
Боб определяет круг претендентов на слово $X$ (слов\'a такого же цвета, 
что и $X$, находящиеся в 
$\alpha n$-окрестности $Y$). Таких слов должно быть немного (в нашем примере
это будет некоторое число $k=O\left(\frac{n}{\log n}\right)$).
Во второй части Алиса с Бобом решают задачу NBA (наш пример~\ref{nba}
для указанного значения $k$).
\par Опишем подробно обе части протокола:
\begin{enumerate}
\item Докажем существование раскраски $\chi$ размера\footnote{
    $\Vol(\alpha n, n)$ --- объём хэмминговского шара радиуса $\alpha n$.
} $V = \Vol(\alpha n, n)$
такой, чтобы в каждом хэмминговском шаре $W_i$ радиуса $\alpha n$ слов каждого
цвета было не более $k$. Для этого случайным образом покрасим каждую
вершину в один из $|\chi|$ цветов.
\par Пусть $A_i$ обозначает противоположное событие:
в шаре $W_i$ найдётся цвет, представителей которого более, чем $k$.
Тогда верно неравенство:
\begin{gather*}
    P(A_i) \le C_V^k \cdot \frac{1}{|\chi|^k}\cdot |\chi|
\end{gather*}
Обозначим правую часть это неравенства через $p$.
\par Каждое из событий $A_i$ зависит не более, чем от $V_2 - 1$ других событий
$A_j$, где $V_2 = \Vol(2 \alpha n, n)$.
Запишем условие локальной леммы Ловаса (лемма \ref{lll}):
$$  e p V_2 \le 1  $$
Вспоминая, чему равен множитель $p$, перепишем это неравенство в виде
 $|\chi|^{k-1} \ge e C_V^k V_2$.
Напомним, что мы хотим найти раскраску из $V$ цветов. Таким образом,
условие леммы Ловаса принимает вид
$$
    V^{k-1} \ge e C_V^k V_2
$$
Далее мы преобразуем это неравенство к более удобному виду.
От замены $C_V^k$ на $\frac{V^k}{k!}$   неравенство
станет только сильнее. После сокращения получаем $ k! \ge e V V_2$.
Ещё раз огрубим наше неравенство, заменив $k!$ на $\left(\frac{k}{e}\right)^k$;
условие из леммы Ловаса принимает  вид
$\left(\frac{k}{e}\right)^k \ge e V V_2$. Воспользуемся тем фактом, что
$\log V = H(\alpha) n + o(n)$. После логарифмирования видим, что
достаточно потребовать выполнения неравенства
\begin{gather}\label{lovasz-fin}
    k (\log k - \log e) \ge (H(\alpha) + H(2 \alpha) + o(1)) n + \log e
\end{gather}
Положим $k = c \cdot \frac{n}{\log n}$. Теперь видно, что можно так выбрать
константу $c$, чтобы  неравенство (\ref{lovasz-fin}) выполнялось для
всех достаточно больших $n$.
\par Таким образом, для выбранного $k$ локальная лемма Ловаса утверждает, что существует
раскраска, при которой не выполнено ни одно из утверждений $A_i$ ---  что
нам и нужно. Поскольку существование  раскраски с нужными свойствами доказано,
можно найти одну из таких раскрасок перебором.
Эта раскраска фиксируется заранее
(она станет частью коммуникационного протокола).

Теперь  мы готовы описать первый раунда протокола. Он состоит в том, что 
Алиса посылает Бобу  цвет своей строки. Для этого требуется
передать $\log |\chi| = H(\alpha) n + o(n)$ битов.
\item \label{sync-NBA} После того, как Боба узнал цвет строки Алисы, в
его распоряжении имеется $k = c \cdot \frac{n}{\log n}$
претендентов на слово $X$. Мы оказываемся в условиях задача NBA. 
В \cite{kushilevitz-nisan} и \cite{vereschagin} можно найти
неконструктивные решения этой задачи.
Приведём наиболее простое из них.
\begin{lemma}\label{hash-family}
Существует семейство из $m = O(n k)$ хэш-функций
$$F \colon \{0,1\}^n \to \{1, \ldots, k^2\},$$ т.ч. для любого набора
из $k$ претендентов в этом семействе найдётся функция, которая не имеет коллизий
\textup(двух одинаковых значений\textup) на этом наборе.
\end{lemma}
\begin{proof}
Положим $b = k^2$. Выберем случайно $m$ функций из $\{0,1\}^n$ в $\{1, \ldots, b\}$
и покажем, что с положительной вероятностью такое семейство обладает
нужным нам свойством.

Для фиксированного набора претендентов $A$ вероятность того, что случайно
выбранная функция не имеет коллизий на этом наборе, не меньше
\begin{gather*}
    \frac{b}{b}\cdot\frac{b-1}{b}\cdot\ldots\cdot\frac{b-(k-1)}{b}
    \ge \left(1 - \frac{k}{b}\right)^k
    = \left(1 - \frac{1}{k}\right)^k \ge \frac{1}{4}
\end{gather*}
Это значит, что вероятность того, что все $m$ функций имеют коллизию
на данном наборе, не превосходит $\left(\frac{3}{4}\right)^{m}$. Остаётся
просуммировать данную вероятность по всем возможным наборам из
$k$ претендентов (таких наборов не более $2^{nk}$). Таким образом,
вероятность того, что случайно выбранный набор функций не удовлетворяет
условию леммы, не превосходит $2^{n k}\cdot \left(\frac{3}{4}\right)^{m}.$
При подходящем выборе $m$ (например, $m = 3 n k$) 
это число меньше единицы.
\end{proof}
Мы считаем, что некоторое семейство хэш-функций, удовлетворяющих условию
Леммы~\ref{hash-family}, 
найдено перебором (и будем использовать его в протоколе).

Второй раунд коммуникационного протокола состоит в том, 
что Боб находит подходящую 
хэш-функцию из семейства (без коллизий на множестве претендентов)
и шлёт её номер 
Алисе. В третьем раунде Алиса возвращает значение этой функции на своей строке. 
При этом от Боба к Алисе передаётся $\log m = O(\log n)$ битов,
а от Алисы к Бобу
$\log(k^2) = O(\log n)$ битов.
\end{enumerate}
Всего в протоколе передаётся $H(\alpha) n + o(n)$
битов, ч.т.д.
\end{proof}

Таким образом мы видим, что 3 раунда всегда достаточно для достижения
оптимума в смысле коммуникационной сложности. 

До сих пор мы интересовались
только коммуникационной сложностью протокола и не рассматривали вычислительную
сложность алгоритмов Алисы и Боба. Далее мы будем рассматривать
\emph{вычислительно эффективные} протоколы (детерминированные
и вероятностные).

\section{Эффективное решение задачи NBA}\label{hashing}

В этой части мы ненадолго отвлечёмся от задачи синхронизации
и обратимся к задаче NBA (пример~\ref{nba}).
Используя идеи из работы~\cite{fks},
мы построим \textbf{полиномиальный} протокол, который решает
эту задачу.
Полученный алгоритм
понадобится нам в дальнейшем для решения основной задачи.

Для начала сформулируем теоретико-числовую лемму:
\begin{lemma}[об асимптотике первой функции Чебышёва]\label{Chebyshyov}
Функция Чебышёва
\begin{gather*}
  \vartheta(x) = \sum_{p \le x, \text{ p-простое}} \log p
\end{gather*}
имеет следующую асимптотику:
\begin{gather*}
  \vartheta(x) = x + o(x), \text{ при } x \to \infty
\end{gather*}
\end{lemma}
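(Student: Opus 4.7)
План состоит в том, чтобы вывести требуемую асимптотику из теоремы о распределении простых чисел (ТРПЧ), утверждающей, что $\pi(x) = (1+o(1)) \cdot x/\log x$ при $x \to \infty$, где $\pi(x)$~--- число простых, не превосходящих $x$. В коммуникационной задаче, для решения которой данная лемма понадобится, вся тонкая аналитика простых чисел избыточна, поэтому саму ТРПЧ мы планируем принять как известный результат и сослаться на стандартные доказательства (элементарное доказательство Эрдёша и Сельберга либо аналитическое, использующее свойства дзета-функции Римана).

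Верхняя оценка $\vartheta(x) \le x + o(x)$ получается сразу: каждое слагаемое не превосходит $\log x$, а потому $\vartheta(x) \le \pi(x)\log x = x + o(x)$.

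Для нижней оценки зафиксируем $\varepsilon > 0$ и отбросим вклад ``маленьких'' простых. Поскольку для каждого $p \in (x^{1-\varepsilon}, x]$ выполнено $\log p > (1-\varepsilon)\log x$, имеем
$$\vartheta(x) \ge (1-\varepsilon)\log x \cdot \bigl(\pi(x) - \pi(x^{1-\varepsilon})\bigr).$$
По ТРПЧ $\pi(x) = (1+o(1))\cdot x/\log x$, а $\pi(x^{1-\varepsilon}) = O(x^{1-\varepsilon}/\log x)$, так что $\log x \cdot \pi(x^{1-\varepsilon}) = O(x^{1-\varepsilon}) = o(x)$. Подставляя это в предыдущее неравенство, получаем $\vartheta(x) \ge (1-\varepsilon)\, x + o(x)$. Переходя к нижнему пределу и устремляя $\varepsilon \to 0$, заключаем $\liminf_{x\to\infty} \vartheta(x)/x \ge 1$; вместе с верхней оценкой это даёт $\vartheta(x) = x + o(x)$.

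Главной трудностью здесь является сама ТРПЧ~--- нетривиальный результат, подробное доказательство которого выходит далеко за рамки настоящей работы. Переход же между $\vartheta$ и $\pi$~--- рутинная техника, представляющая собой частный случай суммирования по частям Абеля; именно этот шаг мы и планируем провести явно, опираясь на ТРПЧ как на ``чёрный ящик''.
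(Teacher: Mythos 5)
Ваше рассуждение корректно, но идёт по другому пути, нежели статья: в статье эта лемма вообще не доказывается, а лишь даётся ссылка на работу~\cite{dusart}, где установлены (даже явные) оценки для функции Чебышёва. Вы же выводите асимптотику $\vartheta(x) = x + o(x)$ из теоремы о распределении простых чисел $\pi(x) = (1+o(1))\,x/\log x$ стандартным приёмом: верхняя оценка тривиальна ($\vartheta(x) \le \pi(x)\log x$), а для нижней отбрасывается вклад простых, не превосходящих $x^{1-\varepsilon}$, который есть $O(x^{1-\varepsilon}) = o(x)$; все переходы верны, и предельный переход $\varepsilon \to 0$ через нижний предел проведён аккуратно. Стоит лишь понимать, что утверждение $\vartheta(x) \sim x$ само является одной из классических эквивалентных формулировок ТРПЧ, так что ваш вывод --- это переход от одной эквивалентной формы к другой, и вся содержательная часть по-прежнему спрятана в ``чёрном ящике''. В этом смысле оба подхода равноценны (оба опираются на внешний глубокий результат); ваш вариант выигрывает тем, что явно демонстрирует редукцию и делает текст более замкнутым, вариант статьи --- краткостью. Мелкое замечание: приём с отсечением малых простых --- это не буквально суммирование по частям Абеля, а более грубая (и более простая) версия того же рассуждения; на корректность это не влияет.
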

\begin{proof}
Доказательство можно найти, например, в~\cite{dusart}.
\end{proof}

\begin{lemma}\label{U-size reduction}
Пусть дано множество $U$ размера\footnote{
    В наших задачах обычно размер множества $m=2^n$.}
$m$
и рассматриваются его всевозможные подмножества $S$ размера $k$.
Тогда существует семейство хэш-функций
$f_i\colon U \to \{1, \ldots, k^2 \log m\}$ размера не более $k^2 \log m$, 
такое что для
любого $S$ найдётся хэш-функция из этого семейства, которая не имеет коллизий 
на $S$ \textup(т.\,е. её ограничение на $S$ взаимнооднозначно\textup).
\par Если дополнительно потребовать, чтобы множество $U$ состояло из
$l$-значных чисел, то вычисление хэш-функции на элементе $x \in U$
производится за время $O(l^2)$.
\par При этом по списку элементов $S$ можно найти в данном семействе требуемую 
хэш-функцию за время $O(k^3 l^2 \log m)$.
\end{lemma}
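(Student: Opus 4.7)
Мой план — применить классическую FKS-подобную конструкцию на основе деления с остатком по простому модулю. В качестве семейства возьму функции $f_p(x) = 1 + (x \bmod p)$, где $p$ пробегает простые числа, не превосходящие некоторой границы $N$, которую выберу позже. Интуитивно: разные простые по-разному разрешают коллизии, и достаточно будет показать, что для каждого $k$-элементного $S \subseteq U$ в этом семействе найдётся хотя бы одно $p$, не дающее коллизий на $S$.

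Центральный шаг — комбинаторная оценка числа «плохих» простых для фиксированного $S$. Функция $f_p$ склеивает различные $x, y \in S$ тогда и только тогда, когда $p \mid (x-y)$, то есть $p$ обязано входить в число простых делителей ненулевого числа $|x-y| < 2^l$. У каждого такого числа не более $\log_2(2^l) = l = \log m$ различных простых делителей; просуммировав по всем $\binom{k}{2}$ парам, получу оценку не более $\binom{k}{2} \log m < k^2 \log m / 2$ плохих простых. По лемме~\ref{Chebyshyov} $\pi(N) = (1+o(1))\, N / \log N$, поэтому при подходящем выборе $N$ порядка $k^2 \log m$ число простых в отрезке $[2, N]$ заведомо превзойдёт число плохих, и хорошее $p$ в семействе гарантированно найдётся. Размер семейства и диапазон значений при этом оба получаются $O(k^2 \log m)$.

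Наконец, оценки времени. Для $l$-битового $x$ и простого $p$ с $\log p = O(l)$ вычисление $x \bmod p$ школьным делением занимает $O(l \cdot \log p) = O(l^2)$. Чтобы по данному списку $S$ найти в семействе хэш-функцию без коллизий, переберу все $O(k^2 \log m)$ простых; для каждого вычислю $k$ хэш-значений (за $O(k l^2)$) и сортировкой за $O(k \log k)$ проверю отсутствие повторов. Итого $O(k^2 \log m \cdot k l^2) = O(k^3 l^2 \log m)$, что и требуется.

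Главным препятствием, на мой взгляд, будет именно согласование констант: прямолинейное применение оценки Чебышёва приводит к $N$ порядка $k^2 \log m \cdot \log(k \log m)$, и этот лишний логарифмический множитель нужно либо втянуть в константу при $k^2 \log m$, либо устранить более тонкой количественной формой асимптотики $\pi(N)$. Именно это аккуратное место и предстоит провести, тогда как сама комбинаторика плохих простых и алгоритмические оценки почти сразу укладываются в требуемые рамки.
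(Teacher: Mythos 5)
Ваша конструкция совпадает с авторской: то же семейство $f_p(x)=x\bmod p$ по простым модулям, тот же перебор простых решетом с проверкой инъективности и та же итоговая оценка времени поиска $O(k^3 l^2 \log m)$. Однако в ключевом месте --- существовании хорошего простого уже на отрезке $[1,\,k^2\log m]$ --- у вас остаётся настоящий пробел, который вы сами честно отмечаете, но не закрываете. Подсчёт <<плохих>> простых (не более $\binom{k}{2}\log m$ штук) с последующим сравнением с $\pi(N)\sim N/\ln N$ действительно даёт лишь $N=\Theta\bigl(k^2\log m\cdot\log(k^2\log m)\bigr)$; лишний множитель $\log(k^2\log m)$ растёт вместе с $k$ и $m$ и <<в константу не втягивается>>, а более точная количественная форма асимптотики $\pi(N)$ не поможет, поскольку $\pi(N)\sim N/\ln N$ --- это уже точная асимптотика. Значит, в заявленном виде (семейство размера $k^2\log m$ со значениями в $\{1,\ldots,k^2\log m\}$) лемма вашим рассуждением не доказана.

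Недостающая идея в том, что улучшать нужно оценку числа плохих простых, а не числа всех простых. В статье это делается сравнением произведений: полагается $t=\prod_{i>j}(x_i-x_j)<m^{\binom{k}{2}}$, откуда $\log t<\binom{k}{2}\log m<\tfrac{1}{2}k^2\log m$; если бы все простые $p\le N=k^2\log m$ делили $t$, то $t$ было бы не меньше их произведения, логарифм которого по лемме~\ref{Chebyshyov} равен $\vartheta(N)=N+o(N)>\log t$ --- противоречие. Поэтому уже среди простых, не превосходящих $k^2\log m$, найдётся простое, не делящее $t$, то есть задающее инъекцию на $S$. (Эквивалентно можно было бы заметить, что число различных простых делителей числа $t$ есть $O(\log t/\log\log t)$, а не $O(\log t)$, --- тогда и ваш подсчёт сходится, но это по существу та же оценка Чебышёва.) Остальная часть вашего рассуждения --- вычисление $x\bmod p$ за $O(l^2)$ и перебор всех простых до $N$ с проверкой отсутствия коллизий за суммарное время $O(Nkl^2)$ --- корректна и совпадает с авторской.
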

\begin{proof}
Приведём доказательство, основанное на идеях из работы~\cite{fks}:
\par Будем считать элементы $U$ натуральными числами от $1$ до $m$.
Пусть $S = \{x_1, \ldots, x_k \}$. Обозначим
$t = \prod_{i > j} (x_i - x_j)  < m^{C_{k}^2}$.
Отсюда следует, что $\log t < C_{k}^2 \log m < k^2 \log m$.  Из 
Леммы~\ref{Chebyshyov} мы знаем, что
логарифм произведения простых чисел, меньших $N$, есть $N + o(N)$.
Отсюда следует, что обязательно найдётся простое число $q$,
меньшее $N = k^2 \log m$,
которое не входит в разложение $t$ на простые множители, т.е. не делит $t$.
Это означает, что функция $f_q (x) = x\bmod q$
является биекцией на $S$.\footnote{
    Через $x\bmod q$ будем здесь и далее обозначать остаток от деления
    $x$ на $q$.
}
При этом мы видим, что вычисление этой функции
не сложнее вычисления остатка по модулю, сложность которого для $l$-значных
чисел есть $O(l^2)$.
\par Для построения конкретного $q$ мы проделываем алгоритм решета
Эратосфена\footnote{ См.~\cite{Atkin2003}.} и находим все простые числа меньшие 
$N = k^2 \log m$ (за время $O(N / \log \log N)$). Затем для каждого
простого числа проверяем, является ли полученная функция взаимно-однозначной
(на это требуется $O(k l^2)$ времени и $O(k l)$ памяти).
Итак, суммарное время ограничено величиной 
$O(N k l^2) = O(k^3 l^2 \log m)$, ч.т.д.
\end{proof}

\begin{statement}\label{constructive NBA}
Пусть Алиса и Боб решают задачу \ref{nba} (NBA), в которой у Боба
есть $k$ строк длины $n$ каждая. Для этой задачи существует полиномиальный
двухраундовый протокол
с коммуникационной сложностью 
\begin{gather*}
C \le 2 \log (k^2 n) +  O(1) = 2 \log n + 4 \log k + O(1)
\end{gather*}
При этом вычислительная сложность для Боба есть $O(k^3 n^3)$,
а для Алисы $O(n^2)$.
\end{statement}
\begin{proof} 
\par \textbf{РАУНД 1.} Боб подбирает простое число $q$,
существование которого гарантирует лемма \ref{U-size reduction},
и посылает это $q$  Алисе.
\par \textbf{РАУНД 2.} Алиса вычисляет значение $f_q (X) = X\bmod q$
и посылает его Бобу. Боб вычисляет значение $f_q(\cdot)$ от всех своих
$X_i$ и определяет алисин $X$.
\par В этом протоколе в каждую сторону передаётся число битов равное
$\log q \le \log (k^2 n) + O(1)$.
\par Вычислительная сложность для Боба состоит в поиске 
требуемого простого числа $q$
(сложность $O(k^3 n^2 \log 2^n) = O(k^3 n^3)$)
и в вычислении хэш-функции от всех
своих $k$ строк (сложность\footnote{
    Вспомним, что
    вычисление остатка для $n$-значных чисел происходит за время $O(n^2)$
}
$O(k n^2)$).
Для Алисы же это лишь вычисление одной
хэш-функции от своей строки --- $O(n^2)$.
\end{proof}
Заметим, что при простой передаче слова от Алисы к Бобу нам потребовалось
бы передать экспоненциально бóльшее количество битов.
\begin{corollary}\label{non-constructive refinement}
Часть \ref{sync-NBA}  (второй и третий раунды) в протоколе из утверждения \ref{non-constructive sync}
может быть осуществлена конструктивно.
\par Коммуникационная сложность
этой части будет $O(\log n)$, а вычислительная сложность --- 
$O\left(\frac{n^6}{\log^3 n} + \frac{n^3}{\log n}\right) = o(n^6)$.
\end{corollary}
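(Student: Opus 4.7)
План доказательства состоит в прямом применении утверждения~\ref{constructive NBA} к задаче NBA, возникающей во второй и третьей частях протокола из теоремы~\ref{non-constructive sync}. Напомним, что после первого раунда исходного протокола (в котором Алиса передаёт Бобу цвет своей строки) у Боба имеется $k = c \cdot \frac{n}{\log n}$ претендентов на неизвестное ему слово $X \in \{0,1\}^n$, причём естественный порядок оставшихся раундов таков: на втором раунде говорит Боб, на третьем~--- Алиса. Это в точности соответствует структуре двухраундового NBA-протокола из утверждения~\ref{constructive NBA} (сначала Боб шлёт простое число $q$, затем Алиса возвращает $X \bmod q$), так что стыковка с первым раундом исходного протокола проходит автоматически, и общее число раундов остаётся равным трём.

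Сначала я бы проверил коммуникационную сложность новой части. Утверждение~\ref{constructive NBA} даёт оценку $2\log(k^2 n) + O(1) = 2\log n + 4 \log k + O(1)$. Подставляя $k = c \cdot \frac{n}{\log n}$, получаем $6\log n - 4\log\log n + O(1) = O(\log n)$, что и требуется.

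Далее аккуратно подсчитаем вычислительную сложность. У Боба главная работа~--- поиск простого числа $q$ (решето Эратосфена до $N = k^2 \log 2^n = k^2 n$ с последующей проверкой взаимной однозначности на $k$ строках); по утверждению~\ref{constructive NBA} это занимает $O(k^3 n^3)$, что при нашем $k$ превращается в $O\!\left(\frac{n^6}{\log^3 n}\right)$. К этому добавляется вычисление значений $X_i \bmod q$ для всех $k$ кандидатов~--- $O(k n^2) = O\!\left(\frac{n^3}{\log n}\right)$. У Алисы требуется одно взятие остатка по модулю $q$~--- $O(n^2)$, что поглощается. В сумме получаем заявленную оценку $O\!\left(\frac{n^6}{\log^3 n} + \frac{n^3}{\log n}\right) = o(n^6)$.

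Существенных препятствий здесь не возникает: утверждение~\ref{constructive NBA} уже сделало всю содержательную работу, и остаётся лишь подставить конкретное значение $k = c\cdot n/\log n$, возникающее в теореме~\ref{non-constructive sync}, в соответствующие оценки и проверить арифметику. Тонкость, за которой стоит последить,~--- это отсутствие ``спрятанных'' логарифмических множителей при подсчёте $\log m$ с $m = 2^n$, но здесь $\log m = n$, и никаких сюрпризов нет.
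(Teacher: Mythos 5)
Ваше доказательство верно и по существу совпадает с доказательством в статье: следствие получается прямой подстановкой $k = c\cdot n/\log n$ в оценки утверждения~\ref{constructive NBA}, что вы и проделали (в статье эта арифметика даже не выписана, а лишь указана ссылка на NBA-протокол). Замечание о совпадении порядка раундов (Боб~--- Алиса) со вторым и третьим раундами протокола из теоремы~\ref{non-constructive sync} также корректно.
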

\begin{proof}
Для этого нужно решать задачу NBA как показано в
утверждении~\ref{constructive NBA}.
\end{proof}

\medskip

В качестве бонуса рассмотрим обобщение задачи \ref{nba} (NBA),
при котором у Алисы есть не одна строка длины $n$, а $l$ строк (напомним,
что у Боба в наличии имеется $k$ строк) и Боб хочет узнать все эти $l$ строк.
Например, Алиса знает названия команд, прошедших в четвертьфинал и хочет
сообщить их Бобу.
\par Очевидным решением было бы такое,
при котором Алиса и Боб просто $l$ раз подряд применяют протокол из
утверждения~\ref{constructive NBA}. Коммуникационная сложность такого
протокола была бы $ C \le (2 \log n + 4 \log k) \cdot l + O(l)$.
Проблема этого решения в том, что от Алисы к Бобу передаются слишком большие
хэш-значения.
Покажем, что для $n$, много больших $k$, мы можем улучшить эту оценку.
Для этого, правда, нам придётся перейти к вероятностному протоколу
(определение~\ref{probabilistic_private}) с нулевой вероятностью
ошибки $\varepsilon = 0$.

\begin{lemma}[О вторичном хэшировании] \label{secondary-hash}
Пусть дано множество $U'$ размера $m' = v-1$ (v --- простое)
и рассматриваются его 
всевозможные подмножества $S'$ размера $k$. Рассмотрим семейство отображений
$h_s[x] = (s x\bmod v)\bmod 2 k^2$, где $0 \le s \le m'$ задаёт номер функции
внутри семейства. Тогда если мы зафиксируем
некоторое конкретное множество $S'$, то как минимум половина функций из этого
семейства не будет иметь коллизий на $S'$.
\end{lemma}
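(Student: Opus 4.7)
The plan is a standard FKS-style counting argument. Fix a $k$-element set $S' = \{x_1, \ldots, x_k\} \subset U'$; call an index $s \in \{0, 1, \ldots, v-1\}$ \emph{bad} if $h_s$ has a collision somewhere on $S'$. I want to show that strictly fewer than $v/2$ of the $v$ indices $s$ are bad. The approach is a union bound over the $C_k^2$ pairs of $S'$: for each pair, bound the number of $s$ that make \emph{that} pair collide.

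For a fixed pair $x_i \ne x_j$, introduce $g(s) := (s x_i \bmod v) - (s x_j \bmod v)$, an integer in $\{-(v-1), \ldots, v-1\}$ satisfying $g(s) \equiv s(x_i - x_j) \pmod{v}$, so that $h_s(x_i) = h_s(x_j)$ is equivalent to $2k^2 \mid g(s)$. For $s \in \{1, \ldots, v-1\}$, primality of $v$ together with $x_i \not\equiv x_j \pmod{v}$ makes $s \mapsto s(x_i - x_j) \bmod v$ a bijection of $\{1, \ldots, v-1\}$ onto itself; in particular $g(s) \ne 0$ and $g$ is injective on $\{1, \ldots, v-1\}$. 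Hence the bad $s \in \{1, \ldots, v-1\}$ for this pair inject into the set of nonzero multiples of $2k^2$ inside $[-(v-1), v-1]$, of which there are only $2\lfloor (v-1)/(2k^2) \rfloor \le (v-1)/k^2$.

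Summing over all $C_k^2$ pairs, the number of bad $s \in \{1, \ldots, v-1\}$ is at most $C_k^2 \cdot (v-1)/k^2 = (k-1)(v-1)/(2k)$. Including the trivially bad $s = 0$ (on which every element hashes to $0$), the total number of bad indices is at most $(k-1)(v-1)/(2k) + 1$, so the number of good $s$ is at least
\begin{gather*}
v - 1 - \frac{(k-1)(v-1)}{2k} \;=\; \frac{(v-1)(k+1)}{2k} \;\ge\; \frac{v}{2},
\end{gather*}
where the final inequality is equivalent to $v \ge k+1$, which is automatic since $S'$ consists of $k$ distinct elements of $U'$.

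The delicate point is the combination of the injectivity of $s \mapsto g(s)$ with a clean count of nonzero multiples of $2k^2$ in the symmetric interval $[-(v-1), v-1]$; once these two ingredients are in hand the rest is arithmetic. Note that the specific choice of codomain size $2k^2$ in the definition of $h_s$ is exactly what is needed to absorb the factor $C_k^2 \approx k^2/2$ coming from the pair count.
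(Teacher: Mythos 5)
Your proof is correct: the injectivity of $s \mapsto g(s)$, the count of nonzero multiples of $2k^2$ in the symmetric interval, the union bound over the $C_k^2$ pairs, and the final arithmetic (including the separate handling of $s=0$ and the reduction to $v \ge k+1$) all check out. The paper itself gives no argument here --- it only cites \cite[Corollary 4]{fks} --- and what you have written is a faithful self-contained reconstruction of exactly that FKS counting argument, so there is nothing to object to.
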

\begin{proof}
Доказательство данного утверждения можно найти в \cite[Corollary 4]{fks}.
\end{proof}

Теперь мы можем сформулировать утверждение 
(сравните с утверждением~\ref{constructive NBA}):
\begin{statement}\label{multiple NBA}
Для рассмотренной модификации задачи NBA существует протокол с коммуникационной
сложностью
$$
C \le 2 \log n + (2 l + 4) \log k + O(l)
$$

Этот коммуникационный протокол \underline{вероятностный}: в алгоритме
Боба используются случайные биты. При этом ответ, получаемый в
результате выполнения протокола, всегда корректен (вероятность ошибки нулевая),
а число пересылаемых битов не зависит от выпавших случайных битов.

Среднее время работы для Боба равно $O(k^3 n^3)$
(усреднение берётся по случайным битам).
Алиса использует детерменированный алгоритм, который выполняется за
время $O(l n^2)$.
\end{statement}
\begin{proof}
\par \textbf{РАУНД 1.} Боб подбирает простое число $q$,
существование которого гарантирует лемма \ref{U-size reduction}.
После этого мы от исходного множества $U$ размера $2^n$ переходим
к множеству вычетов по модулю $q$, т.е. к множеству $U'$ размера
$k^2 \cdot n$. Далее Боб начинает случайно выбирать номер $s$ функции
из семейства $\{h_s\}$, о котором идёт речь в
лемме~\ref{secondary-hash}, и проверяет, что данная функция не даёт коллизий.
Среднее время, которое он потратит
на эту процедуру, прежде чем достигнет успеха, будет равно:
\begin{gather*}
\sum_{i=1}^{\infty}\frac{1}{2^i} \cdot i \cdot
O(k \cdot n^2) = 2 \cdot O(k \cdot n^2) = O(k \cdot n^2)
\end{gather*}
\par После этого Боб посылает Алисе пару чисел $(q, s)$, что составляет
$2 \log (k^2 \cdot n) + O(1)$ битов.
\par \textbf{РАУНД 2.} Алиса, получив от Боба числа $q$ и $s$, вычисляет
от своих $l$ строк функцию $x \to h_s[x\bmod q]$, и посылает эти $l$ значений
Бобу. Размер каждого значения --- $\log(2 k^2)$
(в силу леммы~\ref{secondary-hash}) и вычисляется оно за время $O(n^2)$.
\par Легко видеть, что этот алгоритм удовлетворяет требуемым оценкам.
\end{proof}
Итак, используя идеи хэширования, мы получили решение
задачи~\ref{nba} вместе с её обобщением, а также продвинулись на пути
к получению конструктивного протокола для задачи~\ref{files}.

\section{Связь с теорией кодирования}
\subsection{Коды, исправляющие ошибки}
В утверждении~\ref{one-round-non-constructive} был предъявлен неконструктивный
однораундовый протокол с коммуникационной сложностью
$C = H(2\alpha) \cdot n + o(n)$. Можно ли сделать его конструктивным?
Оказывается, что в некоторых случаях это можно сделать, используя коды,
исправляющие ошибки
\footnote{В этой
главе мы будем пользоваться стандартной терминологией теории
кодирования. Определения кода, линейного кода, кодового расстояния, и
т.п., см., например, в~\cite{sudan}.}.
\begin{definition}
Кодом с параметрами $(\alpha, R(\alpha))$ будем называть код, исправляющий
часть ошибок, равную $\alpha$, и имеющий скорость $R(\alpha)$
(т.е. слово длины $n$ кодируется словом длины $\frac{n}{R(\alpha)}$).
\end{definition}

Рассмотрим две конструкции, основанные на кодах, исправляющих ошибки:

\begin{statement}[<<Грубая конструкция>>]\label{brute}
Пусть имеется семейство систематических кодов\footnote
{ Систематическим называется код, для
которого кодируемое слово всегда является префиксом своего кода.
В частности, любой линейный код можно переделать в систематический.
Обратное, вообще говоря, не верно.},
с параметрами $(\beta, R(\beta))$.
Тогда по нему можно построить \textbf{однораундовый} коммуникационный протокол
решения задачи~\ref{files} с коммуникационной
сложностью $C = \left(\frac{1}{r} - 1\right) \cdot n$,
где $r$ --- корень уравнения\footnote{Несложно проверить, что для
$R(x) = 1 - H(2 x)$ корень указанного уравнения существует и единственен,
при условии что $\alpha \in \left[0, \frac{1}{2}\right]$.}
\begin{gather*}
  r = R(\alpha r)
\end{gather*}
При этом вычислительная сложность полученного алгоритма равна
вычислительной сложности алгоритма кодирования/декодирования
соответствующего кода.
\end{statement}
\begin{proof}
Пусть имеются строки $X$ у Алисы и $Y$ у Боба как в задаче~\ref{files}.
Рассмотрим следующий протокол:\\
\textbf{Алиса.} Алиса применяет к своему $X$ алгоритм кодирования для кода с
$\alpha r$ ошибками. При этом получается кодовая строка длины
$\frac{n}{R(\alpha r)} = \frac{n}{r}$, первые $n$ битов которой совпадают
с битами $X$. Следующим своим шагом Алиса посылает Бобу оставшиеся
$\frac{n}{r} - n$ битов.\\
\textbf{Боб.} Боб получает от Алисы $\frac{n}{r} - n$ битов и приписывает
их в конце своего $Y$. После этого он трактует полученное слово, как кодовое
слово Алисы с ''ошибками``, которые как будто бы внёс канал с шумом.
Доля таких ''ошибок`` в слове будет равна
$\frac{\alpha \cdot n}{\frac{n}{r}} = \alpha r$, и с ними успешно справится
алгоритм декодирования для используемого кода. Результатом работы алгоритма
будет являться алисино слово $X$ с приписанными в конце битами.

Корректность протокола очевидна.
\end{proof}
\par \textbf{Замечание.} Код, используемый в этой конструкции не обязан
быть линейным --- подойдёт любой систематический код.

\par Рассмотрим также другую конструкцию, предложенную А.Орлитским
в работе~\cite{orlitsky}:
\begin{statement}[<<Тонкая конструкция>>, Орлитский]\label{tricky}
Пусть имеется код с параметрами $(\alpha, R(\alpha))$, обладающий
следующими свойствами:
\begin{enumerate}
\item Код является линейным.
\item \label{tricky-2}
$\forall y \in \{0, 1\}^n$ существует эффективный способ найти кодовое
слово $z$, т.ч. $\rho(y, z) \le \alpha n$ (т.е. существует эффективный
алгоритм декодирования).
\end{enumerate}

Тогда по данному коду строится \textbf{однораундовый} коммуникационный протокол с коммуникационной
сложностью $C = \left(1 - R(\alpha)\right) \cdot n$.
Вычислительная сложность данного протокола
есть $\text{СЛОЖНОСТЬ КОДА} + O(n^3)$, где через ''СЛОЖНОСТЬ КОДА``
обозначена вычислительная сложность алгоритма декодирования соответствующего
кода, исправляющего ошибки.
\end{statement}
\begin{proof}
Пусть $H$ -- матрица проверок на чётность для данного кода\footnote{
Т.е. кодовыми словами являются те и только те $z$, для которых
$H z = 0.$}.
Матрица $H$ имеет размер $n \times (n - k)$, где $k = R(\alpha) n$.

\textbf{Алиса.} Алиса вычисляет синдром $h = H X$ для своего слова $X$
и посылает его Бобу.\\
\textbf{Боб.} Получив синдром $h$ от Алисы, Боб решает систему линейных
уравнений $H t = h - H Y$
и находит \textit{какое-то} решение $t = y'$ данной системы.
На это у него уйдёт время $O(n^3)$. 
Далее Боб применяет протокол для нахождения строки $z$, ''декодирующей``
$y'$ (свойство \ref{tricky-2}). Тогда $H z = 0$ и $\rho(z, y') \le \alpha n$.
Пусть $z^* = y' + Y - X$.
Тогда $H z^* = H y' + H Y - H X = h - HY + HY - h = 0$,
$\rho(z^*, y') = \rho(y' + Y - X, y') = \rho(Y - X, 0) = \rho(Y, X) \le \alpha$.
Выходит, что $z^*$ --- кодовое слово, обладающее теми же свойствами, что и
$z$. В силу однозначности декодирования получаем, что $z^* = z$, а значит
Боб может вычислить $X$ следующим образом: $X = y' + Y - z$.
\par При этом от Алисы к Бобу передаётся $n - k$ битов, то есть
коммуникационная сложность данного протокола равна
$C = n - k = \left(1 - R(\alpha)\right) \cdot n$, ч.т.д.
\end{proof}

\subsection{Коды, допускающие декодирование списком}

В утверждении~\ref{non-constructive sync} был приведён неконструктивный
протокол для решения задачи~\ref{files} синхронизации файлов.
В разделе~\ref{hashing}
мы изучали как сделать конструктивным второй этап протокола синхронизации
(этап, на котором решается задача NBA). Возникает естественный вопрос: можно
ли сделать конструктивным также и первый этап протокола, не увеличив число
передаваемых битов? Частично ответить на этот вопрос
помогают коды, допускающие декодирование списком. Для этого заметим,
что конструкция из утверждения~\ref{tricky} допускает обобщение на случай
кодов, допускающих
декодирование списком (list-decoding codes)\footnote{См.~\cite{sudan},
\cite{Guruswami2006}}:

\begin{statement}\label{tricky-list}
Пусть имеется семейство кодов, допускающих декодирование списком,
с параметрами $(\alpha, R(\alpha), L)$, т.е. таких кодов, что в
$\alpha n$ - окрестности \textit{любого} слова из $\{0, 1\}^n$
находится не более $L$ кодовых слов.
Потребуем также выполнения следующих условий:
\begin{enumerate}
\item Код является линейным кодом, допускающим декодирование списком.
\item $L = O(poly(n))$, т.е. $L$ не превосходит некоторого полинома от $n$.
\item \label{tricky-list-3}
$\forall y \in \{0, 1\}^n$ существует эффективный способ найти список
не более чем из $L$ слов $z_i$,
т.ч. $\rho(y, z_i) \le \alpha n$ (т.е. существует эффективный
алгоритм декодирования списком).
\end{enumerate}
\par Тогда по данному коду строится \textbf{трёхраундовый}
коммуникационный протокол с коммуникационной
сложностью $C = (1 - R(\alpha)) n + o(n)$.
Вычислительная сложность данного протокола
есть $\text{СЛОЖНОСТЬ КОДА} + O(L^3 \cdot n^3)$.
\end{statement}
\begin{proof}
Пусть $H$ --- вновь матрица проверок на чётность для данного кода.
Протокол устроен следующим образом:\\
\textbf{РАУНД 1.} Алиса вычисляет синдром $h = H X$ для своего слова $X$
и посылает его Бобу.\\
\textbf{РАУНДЫ 2-3.} Получив синдром $h$ от Алисы, Боб решает систему линейных
уравнений $H t = h - H Y$
и находит \textit{какое-то} решение $t = y'$ данной системы.
На это у него уйдёт время $O(n^3)$. 
Далее Боб применяет протокол для нахождения строк $z_i$, ''декодирующих``
$y'$ (свойство \ref{tricky-list-3}).
Тогда $H z_i = 0$ и $\rho(z_i, y') \le \alpha n$.
Пусть $z^* = y' + Y - X$.
Тогда $H z^* = H y' + H Y - H X = h - HY + HY - h = 0$,
$\rho(z^*, y') = \rho(y' + Y - X, y') = \rho(Y - X, 0) = \rho(Y, X) \le \alpha$.
Выходит, что $z^*$ --- кодовое слово, обладающее теми же свойствами, что и
все $z_i$. Значит $z^*$ обязательно содержится в списке слов $z_i$,
следовательно $X$ содержится в списке слов $y' + Y - z_i$. Длина этого
списка $L = O(poly(n))$, а значит можем применить протокол из
утверждения~\ref{constructive NBA} для решения
задачи~\ref{nba} (задачи NBA).
\par В общей сложности в этом протоколе от Алисы к Бобу передаётся
$n - k + O(\log n)$ битов,
т.е. $C = (1 - R(\alpha)) n + o(n)$. Вычислительная сложность складывается
из сложности решения системы уравнений, сложности кодирования/декодирования
для кода, исправляющего ошибки и сложности решения задачи NBA. Суммарная
вычислительная сложность в силу следствия~\ref{non-constructive refinement}
равна:
$\text{СЛОЖНОСТЬ КОДА} + O(L^3 \cdot n^3)$.
\end{proof}

Таким образом, если мы научимся строить эффективный код,
допускающий декодирование списком,
со скоростью, близкой к границе Хэмминга ($R(\alpha) = 1 - H(\alpha)$),
то мы автоматически получим оптимальный эффективный трёхраундовый протокол
решения задачи~\ref{files}. Тут следует отметить, что теоретически такие
коды существуют (см., например, \cite{Guruswami2002}, \cite{ZyablovPinsker}),
и даже выбранный случайно линейный код с большой вероятностью будет
удовлетворять нужным нам требованиям (см.\cite[теорема 5.3]{Guruswami2001}).
Но алгоритм декодирования списком для такого кода будет экспоненциальным от $n$,
т.к. для этого требуется перебрать все кодовые слова\footnote{
    Заметим, что сложность декодирования будет именно
    $O\left(2^n \cdot \textup{poly}(n)\right)$,
    а не больше. Это будет важно нам при рассмотрении
    теоремы~\ref{random}.
}. В настоящее время неизвестно конструкций с полиномиальным алгоритмом
декодирования.

\section{Вероятностная постановка задачи}

\par В этом разделе мы изучим вероятностные протоколы решения
задачи о синхронизации файлов:
\begin{example}[Задача о вероятностной синхронизации файлов]
\label{files-random}
Пусть у Алисы имеется строка из $n$ битов $X$, а у Боба --- строка битов $Y$ такой же длины. 
Заранее известно, что $\rho(X, Y) \le \alpha n$. Мы рассматриваем коммуникационные
протоколы, в котором Алисе и Бобу разрешается использовать источники случайных битов 
(см.определение~\ref{probabilistic_private}).
По выполнении протокола Боб должен восстановить слово $X$ с вероятностью ошибки
не более $\varepsilon$.
\end{example}

\subsection{Нижняя оценка}
Начнём с несложного утверждения, которое даёт нижнюю оценку на 
коммуникационную сложность:
\begin{statement}
\label{strong converse}
Для любого положительного $\varepsilon < \frac{1}{2}$
коммуникационная сложность данной задачи не меньше, чем
$H(\alpha) \cdot n + o(n)$.
\end{statement}
\begin{proof}
Вместо коммуникационного протокола с отдельными источниками случайности
мы докажем нижнюю оценку для сложности протокола
с общим для Алисы и Боба источником случайных битов
(определение~\ref{probabilistic_common}).
Любой протокол с раздельными источниками случайности легко переделать в
протокол с такой же коммуникационой сложностью и общим источником
случайности. Роль случайной строки $r$ в этом протоколе будет
играть конкатенация строк $r_A$ и $r_B$ из протокола с раздельными случайными
битами. Поэтому минимальная сложность  для протоколов с общедоступными случайными битами
может оказаться только меньше, чем в исходной модели.

Зафиксируем произвольным образом слово Боба (некоторую строку $Y$ из $n$ битов).
Вероятностный протокол с общим источником случайности можно рассматривать
как распределение вероятностей на некотором
семействе $D$ детерминированных коммуникационных протоколов.
Напомним, что как только мы зафиксировали слово $Y$, детерминированный
протокол представляет собой двоичное дерево,
каждый лист которого помечен
одним из возможных ответов (т.е., какой-то строкой $X \in \{0, 1\}^n$),
а сложность протокола есть глубина этого дерева. Отметим, что все деревья
из семейства $D$ имеют одинаковую глубину, а различаются лишь пометки в вершинах.

По условию задачи, Алиса может получить в качестве своего входа любое слово 
$X\in\{0,1\}^n$, находящееся на расстоянии не более $\alpha n$ от выбранного $Y$.
Напомним, что число таких слов равно $N=2^{H(\alpha) \cdot n + o(n)}.$
Для каждого такого $X$  протокол выдаёт правильный ответ  с вероятностью 
не менее $1 - \varepsilon>1/2$. Это значит, что вероятностная мера протоколов в $D$,
имеющих хотя бы один лист с данной пометкой $X$, должна быть не меньше $1/2$.
Проделав это рассуждение для каждого из $N$ возможных $X$, получим, что
у протокола из семейства $D$ число листьев не может быть меньше $N/2$.

Остаётся заметить, что сложность детерминированного протокола не меньше, чем
логарифм числа его листьев. Следовательно, сложность протокола из $D$
не  меньше $\log (N/2) = H(\alpha) \cdot n + o(n)$, а значит
и сложность исходного вероятностного протокола не меньше
$ H(\alpha) \cdot n + o(n)$,
ч.т.д.
\end{proof}


\subsection{От 3-раудового детерминированного протокола к 1-раундовому вероятностному протоколу}

Переделаем построенный нами детерменированный трёхраундовый протокол,
основанный на кодах, допускающих декодирование списком,
в однораундовый протокол для вероятностной задачи.
Для начала докажем аналог леммы \ref{U-size reduction}:

\begin{lemma}\label{primes-random}
Пусть дано множество $U = \{1, \ldots, 2^n\}$ и некоторое его
подмножество $S$ размера k. Тогда для любого
$a \in \mathbb{N}$ существует
число $A = a n k^2 \log (a n k^2) + o(a n k^2)$,
т.ч. равномерно выбранное \textbf{простое} число $q$ с отрезка $[1, A]$
с вероятностью $1 - \frac{1}{a}$ задаёт
функцию $f_q: x \to x\bmod q$, не имеющую коллизий на $S$, .
\par Как и ранее, вычисление хэш-функции происходит за $O(n^2)$.
Выбор функции из этого семейства (выбор простого числа)
требует времени 
$O\left(\frac{A}{\log \log A}\right)$ (решето Эратосфена).
\end{lemma}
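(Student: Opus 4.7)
План доказательства такой: повторить рассуждение из Леммы~\ref{U-size reduction}, но вместо поиска одного конкретного простого числа, избегающего коллизий на $S$, оценить долю <<плохих>> простых в отрезке $[1, A]$. Как и раньше, положим $S = \{x_1, \ldots, x_k\}$ и $t = \prod_{i > j}(x_i - x_j)$. Простое число $q$ даёт коллизию на $S$ тогда и только тогда, когда $q$ делит некоторую разность $x_i - x_j$, что в силу простоты $q$ эквивалентно $q \mid t$. Поскольку каждая разность меньше $2^n$, получаем $\log_2 t \le n \cdot C_k^2 \le n k^2 / 2$, откуда количество различных простых делителей $t$ также не превосходит $n k^2 / 2$.

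Дальше я бы воспользовался асимптотикой $\pi(A) = A/\log A + o(A/\log A)$ (равносильной Лемме~\ref{Chebyshyov}), чтобы выбрать такое $A$, при котором общее число простых на $[1, A]$ хотя бы в $a$ раз больше числа плохих. Достаточно потребовать $\pi(A) \ge a \cdot n k^2 / 2$, т.е.\ $A \ge (1 + o(1)) \cdot a n k^2 \log A / 2$. Подстановка $A = a n k^2 \log(a n k^2)$ даёт $\log A = \log(a n k^2) \cdot (1 + o(1))$, так что искомое неравенство выполнено с запасом, и получается требуемая асимптотика $A = a n k^2 \log(a n k^2) + o(a n k^2)$. Для такого $A$ вероятность того, что равномерно выбранное простое $q \in [1, A]$ делит $t$, не превосходит $(n k^2 / 2)/\pi(A) \le 1/a$, что и требовалось.

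Алгоритмическая часть сводится к тем же соображениям, что и в Лемме~\ref{U-size reduction}: вычисление остатка $x \bmod q$ для $n$-значного $x$ укладывается в $O(n^2)$ операций, а случайный выбор простого производится с помощью решета Эратосфена, находящего все простые на $[1, A]$ за время $O(A / \log\log A)$, после чего достаточно выбрать одно из них равномерно.

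Основное техническое место --- аккуратность в оценке $A$. Остаточный член в формулировке равен $o(a n k^2)$, а не $o(a n k^2 \log(a n k^2))$, так что нужно убедиться, что поправка $\log\log A$ в $\log A$ и поправка $o(1)$ в $\pi(A)$ совместно дают именно такой слабый остаточный член. Это следует из количественных уточнений Леммы~\ref{Chebyshyov} (например, формы $\vartheta(x) = x + O(x/\log x)$), но требует аккуратного разбора; в остальном рассуждение --- прямое усиление леммы~\ref{U-size reduction} за счёт рандомизации.
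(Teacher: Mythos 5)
Ваше рассуждение по существу совпадает с доказательством в статье: там точно так же вводится произведение разностей $t$, число его простых делителей оценивается величиной $n k^2$ через $\log t < C_k^2 \log m$, а затем по теореме о распределении простых чисел утверждается, что на отрезке $[1, A]$ простых чисел хотя бы в $a$ раз больше, откуда сразу следует оценка вероятности $1 - \frac{1}{a}$; алгоритмическая часть (решето Эратосфена и взятие остатка за $O(n^2)$) тоже идентична. Вы даже несколько аккуратнее статьи: запас в множителе $1/2$ от $C_k^2 \le k^2/2$ и явная проверка асимптотики $\log A$ закрывают то самое «техническое место» с остаточным членом, которое в оригинальном доказательстве просто не обсуждается.
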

\begin{proof}
Как и в лемме \ref{U-size reduction} обозначим 
$S = \{x_1, \ldots, x_k \}$,
$t = \prod_{i > j} (x_i - x_j)  < m^{C_{k}^2}$.
Отсюда следует, что $\log t < C_{k}^2 \log m < k^2 \log m = n k^2$.
Значит число $t$ имеет не более $n k^2$ простых делителей.
Но на отрезке $[1, A]$ имеется не менее $a n k^2$ простых чисел
в силу теоремы о распределении простых чисел\footnote{
    См., например, работу~\cite{Selberg1949}.
}. Значит, выбрав
наугад простое число из этого интервала, мы с вероятностью
$1 - \frac{1}{a}$ попадём в число, которое не является делителем
$t$, а значит не даёт коллизий на $S$, ч.т.д.
\end{proof}

Теперь сформулируем утверждение о существовании протокола:
\begin{statement}\label{random list-dec}
Пусть имеется семейство кодов, допускающих декодирование списком, с параметрами
$(\alpha, R(\alpha), L)$, удовлетворяющее всем условиям утверждения
\ref{tricky-list}. Тогда для любого $a$, растущего медленнее чем $2^n$
($a = \Omega(1)$, $a = 2^{o(n)}$),
по данному коду строится
\textbf{однораундовый} вероятностный протокол с коммуникационной
сложностью $C = (1 - R(\alpha)) n + o(n)$ и вероятностью ошибки
не превосходящей $1 - \frac{1}{a}$.
Вычислительная сложность данного протокола, равна:
\begin{gather*}
    \text{СЛОЖНОСТЬ КОДА} + O(a n L^2 \log(a n L^2)) + O(L n^2) + O(n^3)
\end{gather*}
\end{statement}
\begin{proof}
Рассмотрим следующий протокол:
\par \textbf{Алиса}. Алиса, как и ранее, вычисляет синдром $h = H X$
своего слова $X$. Кроме этого, она генерирует простые числа
с отрезка $[1, A]$ и выбирает одно из них наугад
(см. лемму \ref{primes-random}). После этого, она хэширует $X$
при помощи выбранного ей простого числа $q$ и посылает
хэш-значение Бобу вместе с синдромом $h$ и простым числом $q$.
\par \textbf{Боб}. Боб получает всё от Алисы и сначала запускает
алгоритм восстановления списка претендентов на слово $X$
аналогично утверждению \ref{tricky-list}. После этого он
смотрит на полученное от Алисы число $q$. Боб проверяет, правильно
ли оно хэширует его список претендентов. Если нет, он сигнализирует
об ошибке синхронизации. В противном случае, он смотрит
какой из его претендентов даёт хэш-значение, совпадающее со значением,
полученным от Алисы, и выводит его в качестве ответа.
\par Коммуникационная сложность состоит из длин синдрома
($(1 - R(\alpha)) n$ битов), простого числа $q$ и хэш-значения $f_q[X]$
($2 \log A = O(\log (a n k^2)) = O(\log a) + o(n) = o(n)$).
\par Вычислительная сложность данного протокола складывается из
сложности кода, сложности решения системы уравнений ($O(n^3)$),
сложности построения списка простых чисел
(решето Эратосфена, 
$O(A / \log \log A) = O(A) = O(a n L^2 \log(a n L^2))$) и
сложности вычисления хэш-значений для всех претендентов на слово
$X$ ($O(L n^2)$).
\end{proof}

\subsection{Протокол А.Смита}

У протокола из предыдущего раздела есть очевидный недостаток: для его
работы по прежнему нужно иметь хороший код, допускающий декодирование списком.
С другой стороны, в статье~\cite{smith} А.Смитом была явно построена конструкция
протокола решения задачи~\ref{files-random}
о вероятностной синхронизации файлов
(с полиномиальными алгоритмами для Алисы и Боба), в котором
достигается  нижняя оценка из утверждения~\ref{strong converse}:

\begin{theorem}[A.Smith, Theorem 4]\label{random}
Для всякого $\delta = \Omega\left(\frac{\log \log n}{\sqrt{\log n}}\right)$
существует однораундовый
протокол решения задачи~\ref{files-random}
с коммуникационной сложностью $H(\alpha)\cdot n + O(\delta n)$
и ошибкой $\varepsilon \le 2^{-\delta^3 n / (24 \cdot \log n)}$.
При этом время работы алгоритма ограничено многочленом от $n$.
\end{theorem}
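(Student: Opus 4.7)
План. Я бы строил искомый однораундовый протокол как обобщение конструкции из утверждения~\ref{random list-dec}, заменив явный код, допускающий декодирование списком, на код, заданный случайной линейной функцией, которую Алиса выбирает с помощью своих случайных битов. В протоколе Алиса посылает Бобу три компонента: описание (короткое семя) случайного линейного отображения $L\colon \{0,1\}^n \to \{0,1\}^m$ с $m = H(\alpha)\cdot n + O(\delta n)$; синдром $h_1 = L X$; и короткий вторичный хэш $h_2 = f_q(X)$ в духе леммы~\ref{primes-random}. Боб находит список всех $z$, для которых $L z = h_1$ и $\rho(z, Y) \le \alpha n$, а затем по $h_2$ выбирает из этого списка истинное $X$.

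Ключевые шаги. Во-первых, стандартный аргумент случайного кодирования показывает, что при указанном значении $m$ случайный линейный код с вероятностью $1 - 2^{-\Omega(n)}$ является $(\alpha, L_0)$-декодируемым списком с $L_0 = \textup{poly}(n)$: вероятность того, что в фиксированном хэмминговском шаре радиуса $\alpha n$ окажется больше $L_0$ кодовых слов, убывает дважды экспоненциально, и объединённая оценка по $2^n$ центрам шаров проходит. Во-вторых, применяя лемму~\ref{primes-random} к списку длины $L_0$, можно с высокой вероятностью отсеять все ложные варианты с помощью $o(n)$ дополнительных битов, так что информационная часть анализа полностью параллельна уже разобранной конструкции~\ref{random list-dec}.

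Главное препятствие и его преодоление. Основная трудность~--- именно та, что отмечена в тексте после утверждения~\ref{tricky-list}: для случайного линейного кода канонический алгоритм декодирования списком экспоненциален, поскольку требует перебора всех кодовых слов. Оригинальный вклад А.Смита состоит в замене \emph{полностью случайной} матрицы $L$ на \emph{псевдослучайную} конструкцию, основанную на экспандерах (или сэмплерах), которая сохраняет требуемое свойство $(\alpha, L_0)$-декодируемости, но одновременно допускает полиномиальный алгоритм декодирования. Схема декодирования иерархическая: благодаря разреженно-экспандерной структуре $L$, глобальная задача сводится к большому числу независимых подзадач декодирования малых блоков явным алгебраическим кодом (например, Рида--Соломона или его вариантом), а затем локальные ответы склеиваются по структуре экспандера. Концентрационные неравенства для работы сэмплера дают итоговую оценку вероятности ошибки вида $2^{-\delta^3 n / (24 \log n)}$, причём нижняя граница $\delta = \Omega(\log\log n / \sqrt{\log n})$ обеспечивает, что концентрация оказывается достаточно сильной, чтобы преодолеть внутренний шум подзадач. В результате число передаваемых битов складывается из $H(\alpha) n + O(\delta n)$ (синдром $h_1$) плюс $o(n)$ (семя для $L$ и вторичный хэш $h_2$), время работы Алисы и Боба полиномиально по $n$, а итоговая вероятность ошибки, как сумма вероятности ''плохого`` случайного кода и вероятности коллизии во вторичном хэше, укладывается в заявленную границу.
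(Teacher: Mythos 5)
Прежде всего: сама статья теорему~\ref{random} не доказывает --- за доказательством она явно отсылает к оригинальной работе~\cite{smith}, а вместо этого подробно разбирает \emph{другую}, упрощённую конструкцию (утверждение~\ref{random_alt}) с существенно более слабой оценкой ошибки $O(1/\sqrt{\log n})$. Архитектура, которую вы угадываете для Смита (перемешать позиции, разбить на блоки, синхронизировать каждый блок коротким внутренним кодом, а ``плохие'' блоки исправить внешним алгебраическим кодом типа Рида--Соломона), по духу совпадает и с конструкцией Смита, и с конструкцией из статьи. Однако в предложенном виде это не доказательство. Первая половина вашего текста (полностью случайная линейная матрица $L$ плюс вторичный хэш по лемме~\ref{primes-random}) --- заведомый тупик, что вы сами признаёте: декодирование списком случайного линейного кода требует перебора всех $2^n$ кодовых слов. Вся нагрузка переносится на вторую половину, где конструкция описана лишь лозунгами (``экспандерная структура'', ``иерархическое декодирование'', ``склейка по структуре экспандера'') без конкретного определения кода, алгоритма декодирования и подсчёта параметров.

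Главный пробел --- количественный: оценка ошибки просто постулирована. Чтобы получить $\varepsilon \le 2^{-\delta^3 n/(24\log n)}$, нужно доказать, что число ``опасных'' блоков (в которые после перестановки попало более $(\alpha+\delta)k$ расхождений) превышает порог исправляющей способности внешнего кода лишь с экспоненциально малой вероятностью. Это требует неравенства типа Чернова для зависимых индикаторов, порождённых перестановкой, то есть почти $t$-независимого семейства перестановок при большом $t$ --- именно здесь Смиту нужны KNR-генераторы. Попарной независимости недостаточно: она даёт лишь неравенство Чебышёва и, как показывают лемма~\ref{lemma-unif-pi} и утверждение~\ref{random_alt}, лишь полиномиально (по $\log n$) малую ошибку. Вы пишете, что ``концентрационные неравенства дают итоговую оценку'', не указывая ни какие именно, ни откуда берутся показатель $\delta^3 n/(24\log n)$ и ограничение $\delta=\Omega(\log\log n/\sqrt{\log n})$, --- это пересказ формулировки теоремы, а не её вывод. Без явной конструкции псевдослучайного семейства перестановок с доказанной сильной концентрацией и без аккуратного баланса параметров (длина блока, число проверочных символов внешнего кода, вклад вероятности неудачного выбора внутреннего кода) предложение остаётся планом.
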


Доказательство данной теоремы можно найти в оригинальной работе~\cite{smith} А.Смита.
Здесь же мы подробно рассмотрим конструкцию, которая была предложена А.Ромащенко
и несколько отличается от оригинальной конструкции А.Смита.
Эта конструкция кажется нам более простой для изложения
и выглядит более практичной с точки зрения
возможный применений, т.к. все вычисления ограничены полиномом небольшой
степени.
Следует, однако, отметить, что оценка на вероятность ошибки
протокола, полученного в утверждении~\ref{random_alt}, будет существенно хуже,
чем у А.Смита:
$\varepsilon \le O\left(\frac{1}{\sqrt{\log n}}\right)$.

Как обычно, мы считаем,
что Алиса и Боб в качестве входных данных получают $n$-битные строки
$X$ и $Y$ соответственно. В дальнейшем в описании конструкции мы
будем использовать несколько параметров, зависящих от $n$:
целое число $k$ (мы выберем его порядка $\log n$),
целое число $s$ (оно будет $o(n/k)$)
и рациональное число $\delta$ (оно будет стремиться к нулю при росте $n$).
Более точные значения этих параметров мы уточним по ходу доказательства.

Сначала мы опишем сам коммуникационный протокол. Он
будет состоять из трёх этапов:

\medskip

\textbf{Первый этап.}
Для начала введём следующее определение:\footnote{
    Заметим, что А.Смит в своей конструкции использовал более
    сложное определение ``почти независимости''.
    Для этого ему пришлось прибегнуть к технике
    так называемых KNR генераторов (\cite{knr}). Подробнее об этом в
    его работе~\cite{smith}.
}
\begin{definition}[Семейство попарно независимых перестановок]
\label{pseudo_random_family}
Семейство $\mathfrak{S}$
перестановок $\pi \ :\ \{1,\ldots,n\} \to \{1,\ldots,n\}$ 
будем называть семейством попарно независимых
перестановок, если перестановки $\pi \in \mathfrak{S}$ обладают
следующими свойствами:
\begin{itemize}
\item \textit{Равномерность}:
    $\forall i, x$ $Pr_{\pi \in \mathfrak{S}}[\pi(i) = x]
        = \frac{1}{n}$.
\item \textit{``Почти независимость''}:
    $\forall i \ne j$
    случайные величины $\pi(i)$ и $\pi(j)$ ``почти независимы'',
    т.е. $\forall x \ne y \; Pr_{\pi \in \mathfrak{S}}[\pi(i) = x, \pi(j) = y]
       = \frac{1}{n (n-1)}$.
\end{itemize}
\end{definition}

Несложно проверить, что если в качестве $\mathfrak{S}$ 
рассмотреть все $n!$ перестановок на $n$ элементах, то оба условия будут
выполнены. О том как построить такое семейство \textit{небольшого размера}
мы поговорим чуть позже.

Итак, в начале общения Алиса и Боб договариваются о некоторой перестановке
$\pi$ и применяют эту перестановку к битам своих слов $X$ и $Y$: в
полученных словах $X'$ и $Y'$ для каждой позиции $i$ значение бита
$X'(i)$ (соответственно, $Y'(i)$) вычисляется как $X(\pi(i))$
(соответственно, $Y(\pi(i))$).

\medskip

\textbf{Второй этап.} Алиса и Боб разбивают свои слова
(подвергнутые перестановке битов)
на блоки длины $k$. Таким образом, слова $X'$ и $Y'$ представляются
в виде конкатенаций
$$
 X' = X'_1\cdots X'_m,\  Y' = Y'_1\cdots Y'_m,
$$
где каждый из блоков $X'_i$, $Y'_i$ состоит из $k$ битов, и $m=n/k$.

Далее для каждой пары $X'_i, Y'_i$  для $i=1,\ldots,m$ Алиса и Боб выполняют
протокол синхронизации корректный для слов, отличающихся не более чем
в $(\alpha+\delta)k$ позициях (мы закладываем здесь небольшой запас $\delta$,
выбор которого мы уточним ниже).

Такой выбор внутреннего протокола даёт основания утверждать, что Бобу удастся
восстановить правильно те
слова $X'_i$, которые отличаются от соответствующего $Y'_i$ менее, чем в 
$(\alpha+\delta) k $ позициях.
Про другие пары блоков ($X'_i$ и $Y'_i$, которые отличаются
не менее, чем в $(\alpha+\delta) k $ позициях),
внутренний протокол может выдавать ошибочный ответ.
В дальнейшем  пары блоков $X'_i$ и $Y'_i$, которые отличаются
не менее, чем в $(\alpha+\delta)k $ позициях, мы будем называть \emph{опасными}.

Обсудим подробнее устройство внутреннего протокола.
Проще всего было бы использовать неэффективный метод из
теоремы~\ref{non-constructive sync}. Но тогда для выбора раскраски
потребовалось бы перебрать $2^{\Omega(k 2^k)}$ вариантов, и
для обеспечения полиномиального времени работы алгоритма мы были бы вынуждены
выбрать $k = O(\log \log n)$.
Но, как мы увидим позже, такое маленькое $k$ не подходит для нашей конструкции.

Мы можем позволить себе б\'oльшее $k$,
если в качестве внутреннего кода будем использовать метод из
утверждения~\ref{tricky-list}. В качестве базового кода для этого утверждения
мы возьмём код, который впервые неявно встречался
в работе~\cite{ZyablovPinsker} 1982 года.
Позже, в работе В.Гурусвами (2001 год) было явно сформулировано и доказано
следующее утверждение~\cite[Theorem 5.3]{Guruswami2001}:
\begin{statement}[Зяблов, Пинскер]\label{zyablov}
Для любого $p \le \frac{1}{2}$ и любого $L \to \infty$ существует
семейство линейных кодов, допускающих декодирование списком,
с длиной списка $L$ и скоростью
$R \ge 1 - H(p) - o(1)$.
Если матрица для кода выбирается случайно и равномерно, то вероятность
того, что она задаёт код с нужными свойствами, будет не меньше, чем
$1 - 2^{-\Omega(\sqrt n)}$.
\end{statement}

Таким образом, Алиса выбирает матрицу с нужными параметрами
случайным образом и посылает её Бобу.
При нашем выборе $k = \log n$ сложность декодирования списком\footnote{
    Именно здесь нам важно, что декодирование происходит за
    $O\left(2^k \cdot \textup{poly}(k)\right) = O(n^2)$.
} для
данного кода будет $O(n^2)$, а число передаваемых битов ничтожно мало.

Обозначим за $Y''_1,\ldots,Y''_m$ те блоки из $k$ битов, которые Боб будет иметь в 
результате выполнения этого этапа.

\textbf{Третий этап.}
Алиса интерпретирует свои блоки $ X'_1\cdots X'_m$ как набор из $m$ 
элементов конечного поля $\mathbb{F}$ (размер поля равен $2^k$; обозначим
его элементы $\mathbb{F}=\{a_1,a_2,\ldots,a_{2^k}\}$). Алиса находит многочлен
$P(t)$  степени не выше $m-1$ над этим полем такой, что 
 $$
 P(a_1) = X'_1,\ \ldots,\ P(a_m) = X'_m
 $$
Затем она  вычисляет значения найденного многочлена $P$ в точках поля
$a_{m+1},\ldots, a_{m+s}$ и посылает их Бобу. В этом месте важно, что
размер поля $2^k > m + s = n/k + s$, --- именно поэтому мы не могли взять
$k = O(\log \log n)$.

По существу, Алиса посылает
синдром для кода Рида-Соломона от строки $X'_1\cdots X'_m$. \footnote{
    За подробным определением кода
    Рида-Соломона и его свойств мы отсылаем читателя к~\cite{sudan}
    или любой другой книге по теории кодирования.
}

Боб сравнивает полученные значения многочлена $P$ с блоками $Y''_1, \ldots, Y''_m$
и восстанавливает значения $X''_1,\ldots,X''_m$. Таким образом,
можно сказать, что Боб выполняет
процедуру декодирования для кода Рида--Соломона. Эта процедура
происходит корректно, если число ошибок $N < s/2$.

Далее Боб составляет конкатенацию
полученных  блоков
 $$
 X'' = X''_1 \cdots X''_m
 $$
и применяет к битам этого слова перестановку, обратную к $\pi$.
Ниже мы покажем, что (при разумном выборе параметров)
полученный результат с большой вероятностью совпадает с $X$.

\bigskip


\textbf{Оценка коммуникационной сложности.}
Сложность первого этапа зависит от размера семейства
перестановок $\mathfrak{S}$. Как мы увидим позже, размер этого
семейства будет $n(n-1)$, т.е. на этом этапе достаточно
передать $O(\log n)$ битов.

На втором этапе для каждого блока $i=1,\ldots,m$ Алиса и Боб передают
$H(\alpha+\delta)k + o(k)$ битов. Таком образом, общая коммуникационная
сложность этого этапа равна $H(\alpha+\delta)n + o(n)$. Если выбирать значение
параметра $\delta$ так, что $\delta\to 0$, то мы получаем коммуникационную
сложность $H(\alpha)n + o(n)$.

На третьем этапе Алиса пересылает $s$ элементов поля $\mathbb{F}$, что требует
в общей сложности $sk$ битов. Если выбирать значение параметра $s$ так, что
$sk/n \to 0$, то коммуникационная сложность данного этапа есть $o(n)$.

Таким образом, коммуникационная
сложность всего протокола равна $H(\alpha)n + o(n)$.

\medskip


\textbf{Сложность вычислений.}
Сложность выбора случайной перестановки $\pi \in \mathfrak{S}$,
как мы увидим позже, заключается лишь в доступе к генератору
случайных чисел и передаче кода перестановки ($O(\log n)$).

На втором этапе происходит протокол из утверждения~\ref{tricky-list},
при этом, как мы видели, сложность декодирования списком ограничена
величиной $O(n^2)$. Значит суммарная сложность этого этапа
есть $\left(O(n^2) + \textup{poly}(\log(n))\right) \cdot m = O(n^3)$.

На третьем этапе Алиса производит обычную полиномиальную интерполяцию
(что можно сделать за полиномиальное время). Боб же восстанавливает
многочлен по его значениям, некоторые из которых заданы с ошибками.
Это можно сделать за полиномиальное время ($O(n^2)$)
с помощью алгоритма Берлекампа-Мэсси, \cite{Atti2006}.
Отметим, что вычисления Алисы и Боба на третьем этапе являются
соответственно кодированием и декодированием для кода Рида-Соломона.

Таким образом, вычисления Алисы и Боба требуют полиномиального времени.

\medskip

\textbf{Вероятность ошибки.}
Как мы уже говорили (утверждение~\ref{zyablov}),
вероятность выбора неправильного кода, допускающего
декодирование списком, экспоненциально мала. Мы не будем учитывать её
в приведённых ниже рассуждениях, а вместо этого домножим окончательную
вероятность корректной синхронизации на $\left(1 - 2^{-\Omega(\sqrt n)}\right)$.

На втором этапе протокола ошибки при передаче блоков $X'_i$ от Алисы к Бобу
могут возникнуть только в опасных парах блоков.
Обозначим количество опасных пар $N$. Заметим, что $N$ есть случайная величина,
зависящая от выбора (на первом этапе) случайной перестановки $\pi$.

Далее, на третьем этапе Боб должен восстановить  многочлен степени
не более $m-1$. Бобу даны значения этого многочлена в $m+s$ точках
($m$ значений $Y''_i$, полученных на втором этапе протокола и ещё $s$ значений, 
присланных Алисой на третьем этапе). Среди первых
$m$ значений максимум $N$ могут быть ``неправильными'' (если пара ``опасная'', то блок $Y''_i$
может не совпадать с $X'_i$).
Таким образом,  Бобу нужно восстановить многочлен степени
не более $m-1$,
если даны его значения в $m+s$ точках, среди которых не более $N$
значений могут оказаться неправильными.
Если $N < s/2$, то искомый многочлен определен однозначно.

\begin{lemma}\label{lemma-unif-pi}
Пусть исходные слова $X$ и $Y$ отличаются не более чем в $\alpha n$
позициях, и слова $X'$ и $Y'$ получены из них с помощью
попарно независимой
перестановки позиций $\pi$.
Тогда вероятность того, что число опасных блоков окажется больше $s/2$,
не превосходит
$\frac{n}{s k^2 \delta^2}$.
\end{lemma}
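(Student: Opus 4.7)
План состоит в применении классической схемы ``Чебышёв для каждого блока, Марков для суммы опасных блоков''. Обозначим через $D \subseteq \{1,\ldots,n\}$ множество позиций, в которых отличаются исходные слова $X$ и $Y$; по условию $|D| \le \alpha n$. Для каждого блока $i = 1,\ldots,m$ (где $m = n/k$) пусть $N_i$ --- число отличий в $i$-м блоке после перестановки; представим $N_i$ в виде суммы $k$ индикаторов $N_i = \sum_{l \in B_i} \xi_l$, где $\xi_l = 1$, если $\pi(l) \in D$, и $0$ иначе. По свойству \emph{равномерности} из определения~\ref{pseudo_random_family} имеем $\M \xi_l = |D|/n \le \alpha$, откуда $\M N_i \le \alpha k$. Поэтому условие опасности блока $N_i \ge (\alpha+\delta) k$ влечёт $N_i - \M N_i \ge \delta k$.

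Ключевой шаг --- оценка дисперсии $N_i$ при условии лишь \emph{попарной} независимости перестановки. Свойство ``почти независимости'' из определения~\ref{pseudo_random_family} даёт для любых $l \ne l'$ из $B_i$:
$$
\Pr[\xi_l = 1,\ \xi_{l'} = 1] \;=\; \frac{|D|(|D|-1)}{n(n-1)} \;\le\; \left(\frac{|D|}{n}\right)^2 \;=\; \M \xi_l \cdot \M \xi_{l'},
$$
то есть ковариации индикаторов неположительны (как при выборе без возвращения). Значит, $\D N_i \le \sum_{l \in B_i} \D \xi_l \le k\alpha(1-\alpha) \le k/4$. По неравенству Чебышёва
$$
\Pr[\text{блок } i \text{ опасный}] \;\le\; \Pr\left[|N_i - \M N_i| \ge \delta k\right] \;\le\; \frac{\D N_i}{(\delta k)^2} \;\le\; \frac{1}{4\delta^2 k}.
$$

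Последний шаг --- собрать оценки по всем блокам. По линейности математического ожидания, если $N$ --- общее число опасных блоков, то
$$
\M N \;\le\; m \cdot \frac{1}{4 \delta^2 k} \;=\; \frac{n}{4 k^2 \delta^2}.
$$
Применяя неравенство Маркова, получаем
$$
\Pr[N > s/2] \;\le\; \frac{2 \M N}{s} \;\le\; \frac{n}{2 s k^2 \delta^2} \;\le\; \frac{n}{s k^2 \delta^2},
$$
что и требовалось. Главной сложностью здесь является именно оценка второго момента $N_i$ при помощи лишь попарной независимости: без заключения о неположительности ковариаций $\xi_l$ потребовались бы концентрационные неравенства, существенно более сильные, чем Чебышёв, а значит, и более сильные условия на семейство $\mathfrak{S}$, что повлекло бы за собой значительный рост размера семейства и, следовательно, числа битов, передаваемых на первом этапе протокола.
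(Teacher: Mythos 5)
Ваше доказательство верно и по существу совпадает с доказательством из статьи: та же схема «неравенство Чебышёва для каждого блока, затем неравенство Маркова для числа опасных блоков», с теми же вычислениями первых и вторых моментов индикаторов через попарную независимость $\pi$. Единственное (приятное) отличие --- вы оцениваете дисперсию блока через наблюдение о неположительности ковариаций ($\M\xi_l\xi_{l'} \le \M\xi_l\,\M\xi_{l'}$), что даёт чистую оценку $\D N_i \le k/4$ без членов $O(k^2/n)$ и оговорки «при достаточно большом $n$», присутствующих в тексте статьи.
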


\begin{proof}
Пусть $\xi_i$ --- случайная величина, индикатор события \\
$\{X(\pi(i)) \ne Y(\pi(i))\}$.
Обозначим через $a_1, \ldots, a_r$ различающиеся позиции в исходных
строках $X$ и $Y$ ($r \le \alpha n)$.
Используя определение~\ref{pseudo_random_family},
найдём величины $\M\xi_i$ и $\M\xi_i \xi_j$ для $i \ne j$:
\begin{gather*}
    \M\xi_i = \sum_{l=1}^r Pr[\pi(i) = a_l] = \frac{r}{n} \\
    \M\xi_i \xi_j = \sum_{l=1}^r \sum_{t=1}^r Pr[\pi(i) = a_l, \pi(j) = a_t]
        = \sum_{t \ne l} \frac{1}{n(n-1)} = \frac{r(r-1)}{n(n-1)}
\end{gather*}

Тогда вероятность того, что $l$-й блок ``опасный'' равна:
\begin{gather*}
    Pr\left[\sum_{j=0}^{k-1} \xi_{l k + j} \ge (\alpha + \delta) k \right]
\end{gather*}

Обозначим $\Phi_l = \sum_{j=0}^{k-1} \xi_{l k + j}$ и найдём его
математическое ожидание и дисперсию:
\begin{gather*}
    \M\Phi_l = k \M\xi_i = \frac{kr}{n} \\
    \D\Phi_l = \M\Phi_l^2 - (\M\Phi_l)^2
        = k \M\xi_i^2 + k(k-1) \M\xi_i\xi_j - \frac{k^2 r^2}{n^2} = \\
    = \frac{k r}{n} + k(k-1)\frac{r(r-1)}{n(n-1)} - \frac{k^2 r^2}{n^2} = \\
    = \frac{k r}{n} \left(
        1 + \underline{\frac{k r}{n-1}} - \frac{k}{n-1} - \frac{r}{n-1}
        + \frac{1}{n-1} - \underline{\frac{k r}{n}}
        \right) = \\
    = \frac{k r}{n} \left(1 + \frac{k r}{n(n-1)} +
        \left(\frac{r}{n(n-1)}- \frac{r}{n}\right)
        + O\left(\frac{k}{n}\right)\right)
        = [\text{т.к. } r \le \alpha n] = \\
    = k \cdot \frac{r}{n} \left(1 - \frac{r}{n} +
        O\left(\frac{k}{n}\right)\right)
        \le \frac{k}{4} + O\left(\frac{k^2}{n}\right)
        \text{, т.к. } r \le \alpha n \le n / 2
\end{gather*}

Воспользуемся неравенством Чебышёва:
\begin{gather*}
    Pr[l\text{-й блок ``опасный''}] 
        = Pr[\Phi_l \ge (\alpha + \delta) k ] = \\
    = Pr\left[\Phi_l - \M\Phi_l
            \ge \left(\alpha - \frac{r}{n} + \delta\right) k
        \right] \le \\
    \le Pr[\Phi_l - \M\Phi_l \ge \delta k]
    \le Pr[|\Phi_l - \M\Phi_l| \ge \delta k]
        \le \frac{\D \Phi_l}{(\delta k)^2} \le \\
    \le \frac{1}{4\delta^2 k} + O\left(\frac{1}{\delta^2 n}\right)
        \le \frac{1}{4\delta^2 k}
            \left(1 + O\left(\frac{k}{n}\right)\right) \le \\
    \le \frac{1}{2\delta^2 k} \text{ при достаточно большом } n
\end{gather*}

Обозначим через $\varphi_l$ индикатор того, что $l$-й блок опасный.
Оценим нужную нам вероятность по неравенству Маркова:
\begin{gather*}
    Pr\left[\sum_{l=1}^{m} \varphi_l \ge s/2\right]
        \le \frac{\sum_{l=1}^m M\varphi_l}{s/2}
        \le \frac{2}{s} \cdot \frac{n}{k} \cdot \frac{1}{2 \delta^2 k}
        = \frac{n}{s k^2 \delta^2} \text{, ч.т.д.}
\end{gather*}

\end{proof}

\bigskip

Теперь мы можем сформулировать утверждение аналогичное теореме~\ref{random}:

\begin{statement}\label{random_alt}
При
$k = \log n$, $s = \frac{n}{\log n \cdot \log{\log n}}$,
$\delta = \sqrt{\frac{\log {\log n}}{\sqrt{\log n}}}$
ошибка в протоколе синхронизации стремится к нулю с ростом $n$,
коммуникационная сложность равна $H(\alpha) n + o(n)$,
а все вычисления Алисы и Боба требуют времени $O(n^3)$.
\end{statement}

Для того, чтобы доказать это утверждение нам осталось построить
семейство попарно независимых перестановок небольшого размера.

Для начала заметим, что, не меняя асимптотик, можем считать
число $n$ простым. Для этого воспользуемся следующим фактом из
теории чисел (\cite{Baker2001}):

\begin{lemma}[Бэйкер, Харман, Пинтц]
Для любого $n$, большего некоторого $n_0$, интервал $[n - n^{0.525}, n]$ содержит простое
число.
\end{lemma}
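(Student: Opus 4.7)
Лемма утверждает, что $\pi(n) - \pi(n - n^{0.525}) > 0$ для всех достаточно больших $n$, что эквивалентно оценке $p_{k+1} - p_k = O(p_k^{0.525})$ на промежутки между соседними простыми. План состоит в том, чтобы получить количественную \emph{нижнюю} оценку
\[ \pi(n) - \pi(n - y) \gg \frac{y}{\log n}, \qquad y = n^{0.525}, \]
из которой немедленно следует существование простого в $[n-y, n]$.

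Прежде всего я свёл бы задачу к оценке взвешенной суммы $\sum_{n - y < m \le n} \Lambda(m)$ для функции Мангольдта, а затем применил бы \emph{сито Хармана} --- специальный вариант комбинаторного сита Бухштаба, приспособленный к работе в коротких интервалах. Итеративное применение тождества Бухштаба разлагает индикатор ``$m$ не имеет малых простых делителей'' в знакопеременную сумму мультилинейных выражений вида
\[ S(M, N) = \sum_{m \asymp M}\sum_{\ell \asymp N} \alpha_m \beta_\ell \cdot \mathbf{1}\{n - y < m\ell \le n\}, \]
где $\alpha_m, \beta_\ell$ --- ограниченные арифметические веса. Такие суммы принято делить на Тип I (один из весов тривиален) и Тип II (оба нетривиальны), поскольку для них доступны разные арифметические оценки.

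Основная техническая задача --- получить нетривиальную оценку на $|S(M,N)|$ для как можно более широкого диапазона $(M, N)$. Для сумм Типа I это делается через теоремы о нулевой плотности $\zeta(s)$ (в частности, оценки Хаксли); для сумм Типа II применяются мультилинейные оценки сумм Клоостермана, полученные Дешуилье и Иванцом. На ``плохих'' участках $(M,N)$, где ни одна из этих оценок напрямую не работает, приходится вставлять дополнительные разложения Бухштаба; возникающие отрицательные члены оцениваются сверху, а положительные --- снизу. Главное препятствие --- именно тонкая комбинаторика этой процедуры: показатель $\theta = 0.525$ появляется как граничная точка, в которой суммарная ``потеря'' после всех разложений сита остаётся строго меньше главного члена $y/\log n$. Подробное доказательство занимает несколько десятков страниц и содержится в оригинальной работе \cite{Baker2001}; в нашем тексте лемма применяется как готовый внешний результат.
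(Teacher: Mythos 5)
Ваш набросок по существу совпадает с подходом статьи: в самой работе эта лемма вообще не доказывается, а приводится как готовый внешний факт со ссылкой на \cite{Baker2001}, и вы в итоге делаете то же самое, корректно описав при этом общую схему доказательства Бэйкера--Хармана--Пинтца (сито Хармана с разложениями Бухштаба, суммы Типа I/II, оценки нулевой плотности и суммы Клоостермана по Дешуилье--Иванцу). Ваше изложение стратегии точное и не содержит ошибок, но, как и в статье, самостоятельным доказательством не является --- обе версии опираются на оригинальную работу как на чёрный ящик.
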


Используя эту теорему, мы можем обе строки $X$ и $Y$ дополнить
нулями до ближайшего простого числа $p \ge n$.

Теперь, в качестве семейства $\mathfrak{S}$ мы можем рассмотреть функции
$f_{a,b}: i \to (a i + b) \mod n$,
где $a \in \{1, \ldots, n-1\}$, $b \in \{0, \ldots, n-1\}$. Легко проверить,
что это семейство удовлетворяет определению~\ref{pseudo_random_family}.

Таким образом, перестановка $\pi$ задаётся двумя числами $a, b$,
которые может сгенерировать
Алиса в самом начале протокола и послать Бобу.
При этом, она перешлёт всего $2 \log n$ битов.

Итак, мы получили конструкцию протокола с раздельным источником случайности,
оптимального с точки зрения коммуникационной сложности
и работающего за полиномиальное время. Этот протокол похож на протокол, предложенный А.Смитом,
но в отличие от него не требует построения KNR-генератора\footnote{
    Подробнее см. в~\cite{smith},\cite{knr}
}(что обеспечивает лучшую вычислительную сложность),
зато имеет худшую вероятность ошибки.

\section{Таблицы результатов}\label{summary}

В заключение приведём сводку того, что получено на данный момент
для задачи~\ref{files}
о синхронизации файлов:

\medskip
{
\footnotesize
\begin{tabular}{|p{2cm}|p{2.2cm}|p{3cm}|p{3cm}|}
\hline
\multicolumn{4}{|c|}{\textbf{Детерминированные однораундовые протоколы}}\\
\hline

& ком\-му\-ни\-ка\-ци\-он\-ная сложность 
& вычислительная сложность
& замечания \\
\hline
оценка снизу
& $H(\alpha) n + o(n)$
& \hrule
& см.утв.~\ref{lower-bound} \\
\hline
оценка сверху
& $H(2 \alpha) n + o(n)$
& \textbf{экс\-по\-нен\-ци\-аль\-ная}
& см.утв.~\ref{one-round-non-constructive}.
Получение лучшей верхней оценки --- открытый вопрос.\\
\hline
<<грубая конструкция>>
& меньше, чем $H(2 \alpha) n + o(n)$
& \textbf{по\-ли\-но\-ми\-аль\-ная} (равна СЛОЖНОСТИ КОДА)
& см.утв.~\ref{brute} \\
\hline
<<тонкая конструкция>> (А.Ор\-лит\-ский)
& $H(2 \alpha) n + o(n)$
& \textbf{по\-ли\-но\-ми\-аль\-ная} (равна СЛОЖНОСТИ КОДА + $O(n^3)$)
& см.утв.~\ref{tricky} \\
\hline
\end{tabular}
\bigskip

\begin{tabular}{|p{2cm}|p{2cm}|p{3cm}|p{3cm}|}
\hline
\multicolumn{4}{|c|}{\textbf{Детерминированные многораундовые протоколы}}\\
\hline

& ком\-му\-ни\-ка\-ци\-он\-ная сложность 
& вычислительная сложность
& замечания \\
\hline
оценка снизу
& $H(\alpha) n + o(n)$
& \hrule
& см.утв.~\ref{lower-bound} \\
\hline
оценка сверху (А.Ор\-лит\-ский)
& $H(\alpha) n + o(n)$
& \textbf{экс\-по\-нен\-ци\-аль\-ная}
& см.утв.~\ref{non-constructive sync} \\
\hline
<<тонкая конструкция>>
& $H(\alpha) n + o(n)$
& \textbf{по\-ли\-но\-ми\-аль\-ная} (равна СЛОЖНОСТИ КОДА + 
$O(L^3 \cdot n^3)$)
& см.утв.~\ref{tricky-list}. Для этого нужен полиномиальный
код декодирования списком, близкий к границе Хэмминга.
Построение такого кода пока является открытым вопросом.\\
\hline
\end{tabular}

\bigskip

\begin{tabular}{|p{2cm}|p{2cm}|p{3cm}|p{3cm}|}
\hline
\multicolumn{4}{|c|}{\textbf{Вероятностные протоколы}}\\
\hline

& ком\-му\-ни\-ка\-ци\-он\-ная сложность 
& вычислительная сложность
& замечания \\
\hline
оценка снизу
& $H(\alpha) n + o(n)$
& \hrule
& см.утв.~\ref{strong converse} \\
\hline
кон\-струк\-ция А.Смита
& $H(\alpha) n + o(n)$
& \textbf{по\-ли\-но\-ми\-аль\-ная}
($O(n^3)$ для утв.~\ref{random_alt})
& см. теорему~\ref{random} и утв.~\ref{random_alt}\\
\hline
де\-ко\-ди\-ро\-ва\-ние списком
& $H(\alpha) n + o(n)$
& \textbf{по\-ли\-но\-ми\-аль\-ная}
(равна СЛОЖНОСТИ КОДА + $O(a n L^2 \log(a n L^2)) + O(L n^2) + O(n^3)$) 
& см.утв.~\ref{random list-dec}. Для этого нужен полиномиальный
код декодирования списком, близкий к границе Хэмминга.
Построение такого кода пока является открытым вопросом.\\
\hline
\end{tabular}
}  



\bibliographystyle{utf8gost71u}
\bibliography{collection}

\begin{thebibliography}{10}
\def\selectlanguageifdefined#1{
\expandafter\ifx\csname date#1\endcsname\relax
\else\language\csname l@#1\endcsname\fi}
\ifx\undefined\url\def\url#1{{\small #1}}\else\fi
\ifx\undefined\BibUrl\def\BibUrl#1{\url{#1}}\else\fi
\ifx\undefined\BibAnnote\long\def\BibAnnote#1{}\else\fi
\ifx\undefined\BibEmph\def\BibEmph#1{\emph{#1}}\else\fi

\bibitem{evfimievski}
\selectlanguageifdefined{english}
\BibEmph{Evfimievski~A.} {A probabilistic algorithm for updating files over a
  communication link}~// \BibEmph{Theoretical Computer Science}. ---
\newblock 2000. --- February. ---
\newblock Vol. 233, no. 1-2. ---
\newblock Pp.~191--199.

\bibitem{orlitsky-practical}
\selectlanguageifdefined{english}
\BibEmph{Orlitsky~A., Viswanathan~K.} {Practical protocols for interactive
  communication}~// 2001 IEEE International Symposium on Information Theory,
  2001. Proceedings. ---
\newblock 2001. ---
\newblock P.~115.

\bibitem{tridgell}
\selectlanguageifdefined{english}
\BibEmph{Tridgell~A.} {Efficient Algorithms for Sorting and Synchronization}:
  Ph.D. thesis~/ The Australian National University. ---
\newblock 1999.

\bibitem{orlitsky-worst-1}
\selectlanguageifdefined{english}
\BibEmph{Orlitsky~A.} Worst-Case Interactive Communication I: Two Messages are
  Almost Optimal~// \BibEmph{IEEE Transactions on Information Theory}. ---
\newblock 1990. ---
\newblock no.~36. ---
\newblock Pp.~1111--1126.

\bibitem{orlitsky-worst-2}
\selectlanguageifdefined{english}
\BibEmph{Orlitsky~A.} Worst-Case Interactive Communication II: Two Messages are
  not Optimal~// \BibEmph{IEEE Transactions on Information Theory}. ---
\newblock 1991. ---
\newblock no.~37. ---
\newblock Pp.~995--1005.

\bibitem{orlitsky-average}
\selectlanguageifdefined{english}
\BibEmph{Orlitsky~A.} Average Case Interactive Communication~// \BibEmph{IEEE
  Transactions on Information Theory}. ---
\newblock 1992. ---
\newblock no.~38. ---
\newblock Pp.~1534--1547.

\bibitem{orlitsky}
\selectlanguageifdefined{english}
\BibEmph{Orlitsky~A.} {Interactive communication of balanced distributions and
  of correlated files}~// \BibEmph{SIAM Journal on Discrete Mathematics}. ---
\newblock 1993. ---
\newblock Vol.~6. ---
\newblock Pp.~548--564.

\bibitem{smith}
\selectlanguageifdefined{english}
\BibEmph{Smith~Adam}. {Scrambling Adversarial Errors Using Few Random Bits,
  Optimal Information Reconciliation , and Better Private Codes}~// SODA '07
  Proceedings of the eighteenth annual ACM-SIAM symposium on Discrete
  algorithms. ---
\newblock Society for Industrial and Applied Mathematics Philadelphia, 2007.
  ---
\newblock Pp.~395 -- 404.

\bibitem{yao}
\selectlanguageifdefined{english}
\BibEmph{Yao~Andrew Chi-Chih}. Some complexity questions related to
  distributive computing(Preliminary Report)~// Proceedings of the eleventh
  annual ACM symposium on Theory of computing. ---
\newblock STOC '79. ---
\newblock New York, NY, USA: ACM, 1979. ---
\newblock Pp.~209--213.

\bibitem{kushilevitz-nisan}
\selectlanguageifdefined{english}
\BibEmph{Kushilevitz~E., Nisan~N.} {Communication complexity}. ---
\newblock Cambridge Univ Pr, 1997.

\bibitem{smith-guruswami}
\selectlanguageifdefined{english}
\BibEmph{Guruswami~Venkatesan, Smith~Adam}. Explicit Capacity-achieving Codes
  for Worst-Case Additive Errors~// \BibEmph{CoRR}. ---
\newblock 2009. ---
\newblock Vol. abs/0912.0965.

\bibitem{orlitsky-one-way}
\selectlanguageifdefined{english}
\BibEmph{Orlitsky~A.} One-Way Communication and Error-Correcting Codes~//
  \BibEmph{IEEE Transactions on Information Theory}. ---
\newblock 2003. ---
\newblock no.~49. ---
\newblock Pp.~1781--1788.

\bibitem{lovasz}
\selectlanguageifdefined{english}
\BibEmph{Erdös~P., L.~Lovász}. {Problems and results on 3-chromatic
  hypergraphs and some related questions}~// \BibEmph{Infinite and Finite
  Sets}. ---
\newblock 1973.

\bibitem{alon-spencer}
\selectlanguageifdefined{english}
\BibEmph{Alon~N., Spencer~J.H.} {The probabilistic method}. ---
\newblock Wiley-Interscience, 2000.

\bibitem{vereschagin}
\selectlanguageifdefined{russian}
\BibEmph{Верещагин~Н.} Коммуникационная
  сложность. ---
\newblock Черновики лекций по курсу
  ``Коммуникационная сложность'', МГУ.

\bibitem{fks}
\selectlanguageifdefined{english}
\BibEmph{Fredman~Michael~L., Koml\'{o}s~J\'{a}nos, Szemer\'{e}di~Endre}.
  {Storing a Sparse Table with $0(1)$ Worst Case Access Time}~//
  \BibEmph{Journal of the ACM}. ---
\newblock 1984. --- June. ---
\newblock Vol.~31, no.~3. ---
\newblock Pp.~538--544.

\bibitem{dusart}
\selectlanguageifdefined{english}
\BibEmph{Dusart~P.} {Sharper bounds for $\psi$, $\Theta$, $\pi$, $p_k$}:
  Université de Limoges, 1998.

\bibitem{Atkin2003}
\selectlanguageifdefined{english}
\BibEmph{Atkin~A. O.~L., Bernstein~D.~J.} {Prime sieves using binary quadratic
  forms}~// \BibEmph{Mathematics of Computation}. ---
\newblock 2003. --- dec. ---
\newblock Vol.~73, no. 246. ---
\newblock Pp.~1023--1031.

\bibitem{sudan}
\selectlanguageifdefined{english}
\BibEmph{Sudan~M.} {Algorithmic Introduction to Coding Theory}. ---
\newblock MIT course 6.897.

\bibitem{Guruswami2006}
\selectlanguageifdefined{english}
\BibEmph{Guruswami~Venkatesan}. {Algorithmic Results in List Decoding}~//
  \BibEmph{Foundations and Trends® in Theoretical Computer Science}. ---
\newblock 2006. ---
\newblock Vol.~2, no.~2. ---
\newblock Pp.~107--195.

\bibitem{Guruswami2002}
\selectlanguageifdefined{english}
\BibEmph{{Guruswami, V. and Håstad, J. and Sudan, M. and Zuckerman, D.}}
  {Combinatorial bounds for list decoding}~// \BibEmph{IEEE Transactions on
  Information Theory}. ---
\newblock 2002. ---
\newblock Vol.~48, no.~5. ---
\newblock Pp.~1021--1034.

\bibitem{ZyablovPinsker}
\selectlanguageifdefined{english}
\BibEmph{В.~В.~Зяблов~М.~С.~Пинскер}. Списочное
  каскадное декодирование~// \BibEmph{Проблемы
  передачи информации}. ---
\newblock 1981. ---
\newblock Vol.~17, no.~4. ---
\newblock Pp.~29--33.

\bibitem{Guruswami2001}
\selectlanguageifdefined{english}
\BibEmph{Guruswami~Venkatesan}. {List Decoding of Error-Correcting Codes}. ---
\newblock 2001.

\bibitem{Selberg1949}
\selectlanguageifdefined{english}
\BibEmph{Selberg~A}. {An elementary proof of the prime-number theorem}~//
  \BibEmph{Annals of Mathematics}. ---
\newblock 1949. ---
\newblock Vol.~50, no.~2. ---
\newblock Pp.~305--313.

\bibitem{knr}
\selectlanguageifdefined{english}
\BibEmph{Kaplan~Eyal, Naor~Moni, Reingold~Omer}. {Derandomized Constructions of
  k-Wise (Almost) Independent Permutations}~// \BibEmph{Algorithmica}. ---
\newblock 2008. --- December. ---
\newblock Vol.~55, no.~1. ---
\newblock Pp.~113--133.

\bibitem{Atti2006}
\selectlanguageifdefined{english}
\BibEmph{Atti~Nadia~Ben, Diaz–Toca~Gema~M., Lombardi~Henri}. {The
  Berlekamp-Massey Algorithm revisited}~// \BibEmph{Applicable Algebra in
  Engineering, Communication and Computing}. ---
\newblock 2006. --- mar. ---
\newblock Vol.~17, no.~1. ---
\newblock Pp.~75--82.

\bibitem{Baker2001}
\selectlanguageifdefined{english}
\BibEmph{Baker~R~C, Harman~G, Pintz~J}. {THE DIFFERENCE BETWEEN CONSECUTIVE
  PRIMES , II}~// \BibEmph{Society}. ---
\newblock 2001. ---
\newblock Vol.~83, no. November 2000. ---
\newblock Pp.~532--562.

\end{thebibliography}

\end{document}